\newcommand{\nll}{\centernot{\ll}}
\newcommand\canceltext{\bgroup\markoverwith
{\textcolor{red}{\rule[.5ex]{2pt}{0.4pt}}}\ULon}
 \newcommand{\indi}{\mathds{1}}
\definecolor{Gray}{gray}{0.95}
\definecolor{LightCyan}{rgb}{0.88,1,1}
\newtheorem{assumption}{Assumption}
\newtheorem{lemma}{Lemma}
\newtheorem{theorem}{Theorem}
\newtheorem{proposition}{Proposition}
\newcommand{\set}{\mathcal}
\newcommand{\tr}{\mathrm{Tr}}
\DeclareMathOperator{\supp}{supp}
\DeclareMathOperator*{\esssup}{ess\,sup}
\newtheorem{definition}{Definition}
\newtheorem{remark}{Remark}
\let\oldref\ref
\renewcommand{\ref}[1]{(\oldref{#1})}
\title{\LARGE \bf
Minimizing Information Leakage of Abrupt Changes\\in Stochastic Systems
}
\author{Alessio Russo$^{\star,1}$ and Alexandre Proutiere$^{1}$
\thanks{$^\star$ Corresponding author}
\thanks{$^{1}$Alessio Russo and Alexandre Proutiere are in the Division of Decision and Control Systems of the EECS School at KTH Royal Institute of Technology, Stockholm, Sweden.
        {\tt\small \{alessior,alepro\}@kth.se}}
}
\begin{document}

\maketitle
\thispagestyle{empty}
\pagestyle{empty}



\begin{abstract}
This work investigates the problem of analyzing privacy of abrupt changes for general Markov processes. These processes may be affected by changes, or exogenous signals, that need to remain private. Privacy refers to the disclosure of information of these changes through observations of the underlying Markov chain. In contrast to previous work on privacy, we study the problem for an online sequence of data. We use theoretical tools from optimal detection theory to motivate a definition of online privacy based on the average amount of information per observation of the stochastic system in consideration. Two cases are considered: the full-information case, where the eavesdropper measures all but the signals that indicate a change, and the limited-information case, where the eavesdropper only measures the state of the Markov process. For both cases, we provide ways to derive privacy upper-bounds and compute policies that attain a higher privacy level. It turns out that the problem of computing privacy-aware policies is concave, and we conclude with some examples and numerical simulations for both cases.
\end{abstract}
\section{Introduction}\label{sec:introduction}
Being able to detect changes in stochastic systems  has several applications: it enables industrial quality control, fault detection, segmentation of signals,  monitoring in biomedicine, and more. The topic of change detection has been widely studied for nearly a century \cite{shewhart1931economic,veeravalli2014quickest,lai1998information,lai2010sequential,lorden1971procedures,moustakides1986optimal,page1954continuous,pollak1985optimal,shiryaev1963optimum,tartakovsky2014sequential}, and has recently sparked an interest in exploring the problem through the lens of differential privacy \cite{cummings2018differentially}.
Differential privacy \cite{dwork2014algorithmic} has emerged as a  technique for enabling data analysis while preventing information leakage. Privacy, in the context of linear dynamical systems, has been used to study the problem of private filtering \cite{le2013differentially}, the problem of  private parameters estimation \cite{wang2017differential}, and more.  Similarly, also private change-point detection algorithms  have been developed \cite{cummings2018differentially}, whose goal is to detect distributional changes at an unknown change-point in a sequence of data while making sure to satisfy a certain level of privacy.
 
In contrast to previous work on privacy, we study the scenario where an eavesdropper tries to detect a change in a controlled stochastic system $\mathcal{S}$. Eavesdropping, which is a leakage of information, leads to a loss of privacy. This privacy loss, in turn, may reveal private information regarding the system. For example, it may expose the action that a person performed on the system, or, in buildings, may reveal when a person enters or leaves an apartment. Furthermore, eavesdropping is more likely to happen if the system has many sensors, which is usually the case in modern cyber-physical systems. The impact of such an attack could be sensibly reduced, if not nullified, in case encryption is used. Nevertheless, encryption may not always be the best option due to increased processing time. Therefore, it is of paramount importance to be able to minimize information leakage  while at the same time satisfying some performance requirements of the system.  

Our analysis draws inspiration from \cite{alisic2020ensuring}, where the authors analyze the privacy properties of an autonomous linear system undergoing step changes. In contrast to their work, we consider the online case for generic Markov processes, whereas in \cite{alisic2020ensuring} they considered the case of offline change detection in linear systems.

\textit{Contributions: } the objectives of this work are twofold: (1) to properly define the problem of privacy in online change-detection problems for Markov processes; (2) to provide ways to derive privacy bounds and show how to compute policies that attain higher privacy level.  We conclude by providing: (A) a library to solve the optimization problems presented here and (B) an example for a linear dynamical system (more examples can be found in the library).

\textit{Organization of the paper: }  \cref{sec:preliminaries} introduces Quickest Change Detection, Markov Decision Processes and our proposed definition of privacy.  In \cref{sec:body}, we introduce the model; in  \cref{sec:full_information}  we analyze the case where the eavesdropper can measure both state and action, and in  \cref{sec:limited_information} we analyse  the case where only the state is measured. We conclude with examples and numerical results in \cref{sec:simulations}.

\section{Preliminaries and Problem Formulation}\label{sec:preliminaries}
In this section we give a brief description of (1) Minimax Quickest Change Detection, (2) the framework of Markov Decision Processes and (3) the problem formulation.

\subsection{Minimax Quickest Change Detection (QCD)} 	Consider an agent willing to detect an abrupt change in a stochastic system. To this aim, the agent has access to a non-i.i.d. sequence of observations $\{Y_t\}_{t\geq 1}$. The change occurs at the unknown time $\nu$, and we let $\mathbb{P}_\nu$ denote the probability measure under which the system dynamics are generated if the change point is $\nu$. We also denote by $\mathbb{P}_\infty$ the probability measure in absence of a change point. Under $\mathbb{P}_\nu$, the conditional density function of $Y_t$ given $(Y_1,\dots, Y_{t-1})$ is $f_0(\cdot|Y_1,\dots, Y_{t-1})$ for $t<\nu$, and $f_1(\cdot|Y_1,\dots, Y_{t-1})$ for $t\geq\nu$. The agent needs to detect the change point in an online manner: her decision takes the form of a stopping time $T$ with respect to the filtration $\{ {\cal F}_t\} _{t\ge 1}$ where ${\cal F}_t=\sigma(Y_1,\ldots Y_t)$. In absence of any prior information about $\nu$, a common approach, due to Lordern and Pollak \cite{lorden1971procedures,pollak1985optimal}, is to aim at devising stopping rule $T$ minimizing the worst case expected delay\footnote{The essential supremum of a real-valued r.v. $X$ is defined up to an event with zero probability: $\esssup X =\inf\{a\in \mathbb{R} : \mathbb{P}_\nu[X\ge a]=0$\}.}
\begin{equation}
	\overline{\mathbb{E}}_1(T):=  \sup_{\nu\geq 1}\esssup \mathbb{E}_\nu[(T-\nu)^+| \set F_{\nu-1}],
\end{equation}
over all possible rules satisfying the constraint  $\mathbb{E}_\infty[T]\geq \bar T$ on the expected duration to false alarm (we impose this constraint since we work in a non-Bayesian setting, where it is not possible to impose a constraint on the  false alarm rate \cite{lai1998information}). Additionally,  for non-i.i.d. observation it is common to make an assumption on the convergence of the average log-likelihood ratio (see \cite{lai1998information} or \cite{tartakovsky2014sequential}), which permits us to find a lower bound on the expected delay.

\begin{assumption}\label{assump:lai_assumption} Define the log-likelihood ratio (LLR) as
 \begin{equation}\label{eq:llr}
 	Z_i \coloneqq \ln \frac{f_1(Y_i|Y_1,\dots, Y_{i-1})}{f_0(Y_i|Y_1,\dots, Y_{i-1})}.
 \end{equation}
Assume that $n^{-1}\sum_{t=\nu}^{\nu+n}Z_t$ converges a.s. under $\mathbb{P}_\nu$, to some constant $I$, and that, for all $\delta >0$,
\begin{small}
\begin{equation}
	\lim_{n\to\infty}\sup_{\nu\geq 1} \esssup \mathbb{P}_\nu \left(\max_{t\leq n} \sum_{i=\nu}^{t+\nu}Z_i\geq I(1+\delta)n\Big| \set F_{\nu-1}\right) = 0.
\end{equation}
\end{small}
\end{assumption}

Assumption \ref{assump:lai_assumption} involves conditioning on $\{Y_t\}_{t=1}^{\nu-1}$, and depends on $I$, which can be interpreted as the average amount of information per observation sample for discriminating between the two models $f_1$ and $f_0$ (note that 
Assumption \ref{assump:lai_assumption} is quite general, and holds, for example, for stable linear dynamical systems).
Under the above assumption, Lai \cite{lai1998information} established an asymptotic (as $\bar T\to\infty$) lower bound on the worst case expected delay of any stopping rule in $D({\bar T})$ (the set of rules satisfying $\mathbb{E}_\infty[T]\geq \bar T$):

\begin{equation}\label{eq:lower_bound_change_detection_nonidd} 
	\liminf_{\bar T\to\infty } \inf_{T \in D(\bar T)} \frac{\overline{\mathbb{E}}_1(T)}{\ln \bar T} \geq I^{-1}.
\end{equation}
This lower bound also provides an interpretation of $I$:  $I$ plays the same role in the change detection theory as the Cramer-Rao lower bound in estimation theory \cite{tartakovsky2014sequential}, hence it quantifies the detection difficulty.
It is also proved that the lower bound is achieved by the CUSUM algorithm with stopping time $T=\inf \left\{t: \max_{1\leq k\leq t} \sum_{i=k}^t Z_i \geq c \right\}$ provided that $c$ is chosen so that $\mathbb{E}_\infty[T]= \bar T$. These results can also be extended to unknown models through the generalized likelihood ratio test \cite{lai2010sequential}.

\subsection{Privacy and hardness of change detection inference} The asymptotic lower bound in \ref{eq:lower_bound_change_detection_nonidd} provides a notion of privacy  in online change-detection problems. In order to maintain privacy, we would like the statistical differences  before and after the abrupt change to be as small as possible. From the perspective of differential privacy \cite{dwork2008differential}, we are interested in bounding the following quantity $\sup_{\tau_N}\ln \frac{\mathbb{P}_\nu(\tau_N)}{\mathbb{P}_\infty(\tau_N)}$, where $\tau_N= (Y_1,\dots, Y_N)$ is a trajectory of size $N$. \begin{remark}In contrast to the classical definition of differential privacy, we are not interested in minimizing the statistical difference between two trajectories $(\tau,\tau')$, but the difference in any trajectory before and after the abrupt change.
\end{remark}

However, uniformly bounding $\frac{\mathbb{P}_\nu(\tau_N)}{\mathbb{P}_\infty(\tau_N)}$ may be detrimental. It is sensitive to outliers, and, in practice, results in unsatisfactory utility \cite{wang2016average}. Instead, a more natural approach is to bound $ \mathbb{E}_{\tau_N\sim {\cal D}}\left[\ln \frac{\mathbb{P}_\nu(\tau_N)}{\mathbb{P}_\infty(\tau_N)}\right]$ over some distribution ${\cal D}$. This quantity, also known as on-average KL-Privacy \cite{wang2016average}, is  distribution-specific quantity, and allows us to study the problem for a specific distribution ${\cal D}$. In this work it comes natural to choose ${\cal D} = \mathbb{P}_\nu$: if Assumption  \ref{assump:lai_assumption} is satisfied, and  we let $N\to\infty$,   we obtain that the on-average KL-privacy coincides with the  quantity $I$ in \cref{eq:lower_bound_change_detection_nonidd}.

This result is not surprising: $I$ dictates how difficult the detection problem is. As $I$ decreases, the time needed to discriminate between the two models increases, and thus becomes harder to note if an abrupt change happened. Therefore, the quantity $I$ lends itself well to define the privacy of an abrupt change.
\begin{definition}[Privacy of an abrupt change]
Consider the observations ${\cal Y}=\{Y_t\}_{t\geq 1}$ of a stochastic dynamical system, where the conditional density function of $Y_t$ given $(Y_1,\dots, Y_{t-1})$ is $f_0$ for $t<\nu$, and $f_1$ otherwise. If $\mathcal{Y}$ satisfies Assumption \ref{assump:lai_assumption}, we define the privacy level of $\mathcal{Y}$ as $\mathcal{I}(\mathcal{Y}) = I^{-1}$.
\end{definition}

In controlled system, we can modify the control policy to manipulate $I$, and, in turn, create a trade-off between control performance and information leakage. We can select a policy that increases the privacy (i.e., minimizes $I$), but this may come at the expense of decreased utility.


%

\subsection{Markov Decision Processes (MDPs)} 
We study stochastic systems that can be modeled using the MDP framework.
An MDP $M$ is a controlled Markov chain, described by a tuple $M=({\cal X}, {\cal U}, P, r)$, where ${\cal X}$ and ${\cal U}$ are the state and action spaces, respectively. $P: {\cal X}\times {\cal U} \to \Delta({\cal X})$ denotes the conditional state transition probability distributions ($\Delta({\cal X})$ denote the set of distributions over ${\cal X}$), i.e., $P(x'|x,u)$ is the probability to move from state $x$ to state $x'$ given that action $u$ is selected. Finally, $r: {\cal X}\times {\cal U}\to \mathbb{R}$ is the reward function. A (randomized) control policy $\pi: {\cal X}\to \Delta({\cal U})$ determines the selected actions, and $\pi(u|x)$ denotes the probability of choosing $u$ in state $x$ under $\pi$. For simplicity, we focus on ergodic MDPs, where ${\cal X}$ and ${\cal U}$ are finite, and where any policy $\pi$ generates a positive recurrent Markov chain with stationary distribution $\mu^\pi$. The value of a policy $\pi$ is defined as $V_M^\pi = \lim_{N\to \infty} \mathbb{E}_M^\pi\left[\frac{1}{N} \sum_{t=1}^{N} r(X_t, U_t)\right]$ (here $U_t$ is distributed as $\pi(\cdot|X_t)$). In ergodic MDPs, the objective is to find a policy $\pi$ with maximal value $ V_M^\star  = \max_\pi V_M^\pi$. In the sequel, we denote by $D(P,Q)=\mathbb{E}_{\omega\sim P}[\ln\frac{P}{Q}(\omega)]$ the KL-divergence between two distributions $P$ and $Q$, and by $d(p,q) = p\ln\frac{p}{q} + (1-p)\ln\frac{1-p}{1-q}$ the  KL-divergence between two Bernoulli distributions of parameter $p$ and $q$.
For two probability measures $P$ and $Q$ we write $P\ll Q$ if $P$ is absolutely continuous with respect to $Q$, i.e., for every measurable set $A$, $Q(A) = 0\Rightarrow P(A)=0$.

\subsection{Problem formulation}\label{sec:body}

In this paper, we investigate the utility privacy trade-off in controlled dynamical systems with one change point. One can also extend the analysis to multiple change points but for simplicity of the exposition, we restrict our attention to a single change point, always denoted by $\nu$. We formulate the problem for ergodic MDPs with finite state and action spaces (however, our results hold for other types of system, e.g., classical linear systems). Consider two ergodic MDPs $M_0$ and $M_1$, and assume that the main agent faces $M_0$  before $\nu$ and $M_1$ after. Let $M_i = ({\cal X},{\cal U}, P_i, r_i), i=0,1$, and assume that $P_1$ is absolute continuous w.r.t. $P_0$, which means that for all pair $(x,u)$, $P_1(x,u)\ll P_0(x,u)$.\\
\noindent We make the following assumptions for the two agents:
\begin{itemize}
	\item  The {\bf main agent} {\it knows} the time at which the MDP changes, and applies the control policy $\pi_0$ (resp. $\pi_1$) for $t<\nu$ (resp. $t\ge \nu$). We assume that just before the change occurs, the system state distribution is $\mu_0^{\pi_0}$, the stationary distribution of the Markov chain induced in $M_0$ by $\pi_0$ (resp. $\mu_1^{\pi_1}$ is the stationary distribution induced by $\pi_1$ on $M_1$).
	\item The {\bf eavesdropper} wishes to infer the change point $\nu$ by observing the system's dynamics.
\end{itemize}
\noindent Then, based on what the eavesdropper can observe, we consider two possible scenarios (depicted in \cref{fig:mdp_scheme}):
\begin{enumerate}
	\item The \textit{full information} scenario, where the eavesdropper is able to observe $Y_t:=(X_t,U_t)$ at time $t$.
	\item The \textit{limited information} case, where the eavesdropper is able to observe only $Y_t:=X_t$ at time $t$.
\end{enumerate}
In the full information case, we denote by $I_F(\pi_0,\pi_1)$ the inverse of the privacy level. Similarly, $I_L(\pi_0,\pi_1)$ is the inverse of the privacy level in the limited information scenario. We will prove that these levels are well-defined (in the sense that Assumption 1 holds). The objective of the main agent is to design the control policies $\pi_0$ and $\pi_1$ realizing an appropriate trade-off between their rewards and privacy level. The utility of $(\pi_0,\pi_1)$ is a linear combination, parametrized by $\rho\in [0,1]$, of the ergodic rewards before after the change point: $V(\rho,\pi_0,\pi_1):= \rho V_{M_0}^{\pi_0} +(1-\rho) V_{M_1}^{\pi_1}$. To assess the trade-off between utility and privacy of the main agent, we will analyze the solution of the following optimization problem for different values of $\lambda\ge 0$:
\begin{equation}\label{eq:optpb}
\sup_{\pi_0,\pi_1} \rho V_{M_0}^{\pi_0} +(1-\rho) V_{M_1}^{\pi_1} -\lambda I(\pi_0,\pi_1),
\end{equation}
where $I(\pi_0,\pi_1)= I_F(\pi_0,\pi_1)$ (resp. $=I_L(\pi_0,\pi_1)$) in the full (resp. limited) information scenario.

\begin{figure}[t]
	\centering
	\includegraphics[width=0.8\columnwidth]{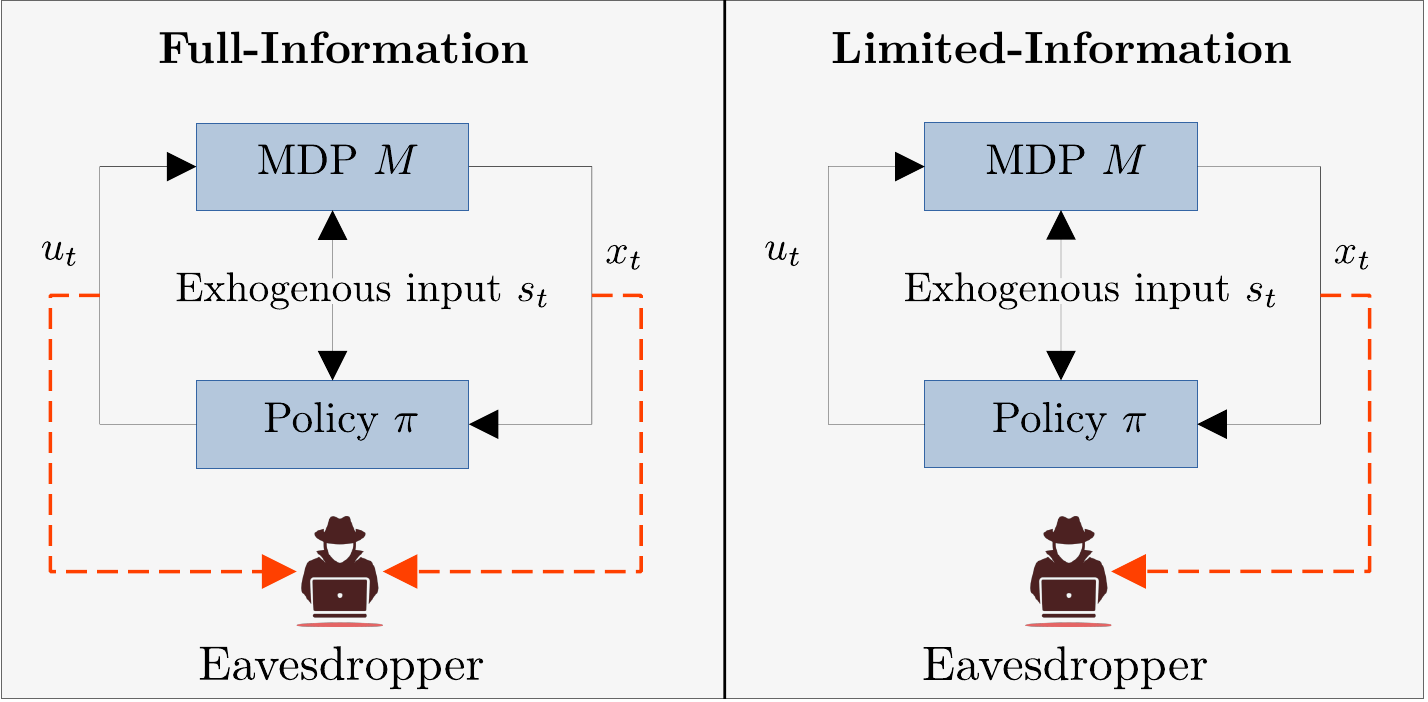}
	\caption{Scenarios considered}
	\label{fig:mdp_scheme}
\end{figure}


\section{Full-information scenario} \label{sec:full_information}
In the full-information case, the eavesdropper can measure both the state and action $(X_t,U_t)$ at time $t$. We first analyze the privacy level $I_F(\pi_0,\pi_1)$, and then investigate the utility-privacy trade-off in this case.
\subsection{Privacy level}
In the full-information case, $I_F(\pi_0,\pi_1)$ can be decomposed in the sum of the average KL-divergence of the two models and the KL-divergence of the two policies:
\begin{theorem}\label{theorem:privacy_full_info}
(i) If for all $x\in \supp(\mu_1^{\pi_1})$, $\pi_1(x) \ll \pi_0(x)$, then the sequence of observations  $\{Y_t\}_{t\ge1}$ (made by the eavesdropper), with $Y_t=(X_t,U_t)$, satisfies Assumption 1, and we have: 	
\begin{align}\label{eq:I_full_information}
I_F(\pi_0,\pi_1) = &\ \mathbb{E}_{x\sim \mu_1^{\pi_1}, u\sim \pi_1(x)}\left[D(P_1(x,u),P_0(x,u))\right]\nonumber \\
& \ \ \ \ + \mathbb{E}_{x\sim \mu_1^{\pi_1}}\left[D(\pi_1(x), \pi_0(x))\right]. 
\end{align}
(ii) If $\exists x\in \supp(\mu_1^{\pi_1}): \pi_1(x) \nll \pi_0(x)$  then $I_F=\infty$.
\end{theorem}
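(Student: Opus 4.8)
The plan is to reduce the whole statement to the ergodic behaviour of the state--action chain run after the change point. First I would write the log-likelihood ratio explicitly. Under $\mathbb{P}_\nu$, for $t\ge\nu$ the pair $Y_t=(X_t,U_t)$ is produced by $X_t\sim P_1(\cdot\mid X_{t-1},U_{t-1})$ and $U_t\sim\pi_1(\cdot\mid X_t)$, whereas under $\mathbb{P}_\infty$ it would be produced by $P_0$ and $\pi_0$; hence $f_1(Y_t\mid Y_{1:t-1})=P_1(X_t\mid X_{t-1},U_{t-1})\,\pi_1(U_t\mid X_t)$ and similarly for $f_0$, so that
\[
Z_t=\ln\frac{P_1(X_t\mid X_{t-1},U_{t-1})}{P_0(X_t\mid X_{t-1},U_{t-1})}+\ln\frac{\pi_1(U_t\mid X_t)}{\pi_0(U_t\mid X_t)}\qquad(t\ge\nu).
\]
Under $P_1\ll P_0$ and, in case (i), $\pi_1(x)\ll\pi_0(x)$ for $x\in\supp(\mu_1^{\pi_1})$, each term along a $\mathbb{P}_\nu$-trajectory with $t\ge\nu$ is a.s.\ finite, and $Z_t=g(W_{t-1},W_t)$ for a function $g$ bounded on the relevant support, where $W_t:=(X_t,U_t)$.

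Second, I would observe that $(W_t)_{t\ge\nu}$ is a time-homogeneous Markov chain, ergodic because the $\pi_1$-induced chain on $M_1$ is, with stationary law $\bar\mu(x,u)=\mu_1^{\pi_1}(x)\pi_1(u\mid x)$. The ergodic theorem for Markov chains then gives, $\mathbb{P}_\nu$-a.s.\ and for every initial distribution, $n^{-1}\sum_{t=\nu}^{\nu+n}Z_t\to I:=\mathbb{E}_{\bar\mu}[g(W_0,W_1)]$, which is the convergence part of Assumption 1. To extract the closed form I would expand $I=\sum_{x,u}\bar\mu(x,u)\sum_{x'}P_1(x'\mid x,u)\sum_{u'}\pi_1(u'\mid x')\,g((x,u),(x',u'))$: the transition-kernel part of $g$ yields $\mathbb{E}_{x\sim\mu_1^{\pi_1},\,u\sim\pi_1(x)}[D(P_1(x,u),P_0(x,u))]$ once the trivial $u'$-sum is carried out, while the policy part yields $D(\pi_1(x'),\pi_0(x'))$ after the $u'$-sum and then $\mathbb{E}_{x\sim\mu_1^{\pi_1}}[D(\pi_1(x),\pi_0(x))]$ after using $\sum_{x,u}\mu_1^{\pi_1}(x)\pi_1(u\mid x)P_1(x'\mid x,u)=\mu_1^{\pi_1}(x')$, i.e.\ stationarity of $\mu_1^{\pi_1}$ under $\pi_1$ on $M_1$. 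Adding the two contributions gives \ref{eq:I_full_information}.

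Third, I would verify the uniform ``max'' condition of Assumption 1. Conditioning on ${\cal F}_{\nu-1}$ fixes only $(X_{\nu-1},U_{\nu-1})$, which ranges over the finite set ${\cal X}\times{\cal U}$, and by time-homogeneity the law of $(W_t)_{t\ge\nu}$ does not otherwise depend on $\nu$; thus $\sup_{\nu\ge1}\esssup\,\mathbb{P}_\nu(\,\cdot\mid{\cal F}_{\nu-1})$ collapses to a maximum over finitely many deterministic initializations of the post-change chain. For each of them, writing $S_t=\sum_{i=\nu}^{\nu+t}Z_i$, the a.s.\ limit $S_t/t\to I$ (with $I>0$ in the non-degenerate case) forces $n^{-1}\max_{t\le n}S_t\to I$ --- split the maximum at a random time past which $S_t\le I(1+\tfrac{\delta}{2})t$ and bound the finitely many early terms --- so $\mathbb{P}(\max_{t\le n}S_t\ge I(1+\delta)n)\to0$, and the maximum of finitely many null sequences is again null.

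Finally, for part (ii): if some $x^\ast\in\supp(\mu_1^{\pi_1})$ admits $u^\ast$ with $\pi_1(u^\ast\mid x^\ast)>0=\pi_0(u^\ast\mid x^\ast)$, then under $\mathbb{P}_\nu$ the post-change chain reaches the unique recurrent class $\supp(\mu_1^{\pi_1})$ and visits $x^\ast$ infinitely often; since $u^\ast$ is then played with a fixed positive probability at each visit, the second Borel--Cantelli lemma (independence across visits of $x^\ast$) shows that $\{X_t=x^\ast,\,U_t=u^\ast\}$ occurs for infinitely many $t\ge\nu$, at which $Z_t=+\infty$, so $n^{-1}\sum_{t=\nu}^{\nu+n}Z_t\to+\infty$ and $I_F=\infty$. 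I expect the one genuinely delicate step to be the third --- upgrading the pointwise strong law to the essential-supremum, uniform-in-$\nu$ form that Assumption 1 demands --- and the finiteness of ${\cal X}\times{\cal U}$ together with time-homogeneity is precisely what makes it go through; the remaining steps are bookkeeping.
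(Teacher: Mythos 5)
Your proof follows essentially the same route as the paper's: write the log-likelihood ratio of the pair $(X_t,U_t)$ explicitly, apply the ergodic theorem to the post-change state--action chain with stationary law $\mu_1^{\pi_1}(x)\pi_1(u\mid x)$, and use stationarity of $\mu_1^{\pi_1}$ to collapse the policy term into $\mathbb{E}_{x\sim \mu_1^{\pi_1}}\left[D(\pi_1(x), \pi_0(x))\right]$. You are in fact somewhat more thorough than the paper, which does not explicitly verify the uniform ``max'' condition of Assumption 1 (it simply invokes ergodicity) and disposes of part (ii) by definition rather than via your Borel--Cantelli argument.
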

\begin{proof}
 If 	$\pi_1(x) \ll \pi_0(x)$ does not hold for some $x\in \supp(\mu_1^{\pi_1})$, then Assumption 1 does not hold, and (ii) follows by definition.  To prove (i), using the Markov property, one easily get an expression of the conditional densities $f_0$ and $f_1$, and deduce what is $Z_i$ in (\ref{eq:llr}): for all $i\ge 2$,
\[
Z_{i} = \ln \frac{f_1(Y_i|Y_{i-1})}{f_0(Y_i|Y_{i-1})}=\ln \frac{\pi_1(U_i|X_i)P_1(X_i|X_{i-1}, U_{i-1})}{\pi_0(U_i|X_i)P_0(X_i|X_{i-1}, U_{i-1})},
\]
By ergodicity,  it follows that $n^{-1}\sum_{t=\nu}^{\nu+n} Z_t$ converges to
\[
\resizebox{\hsize}{!}{%
$I_F(\pi_0,\pi_1)=\sum_{(x,u)\in  {\cal X}\times {\cal U}} \mathbb{E}[Z_2|X_{1}=x, U_1=u]  \pi_1(u|x)\mu_1^{\pi_1}(x)$}.
\]
	
Furthermore, $\mathbb{E}[Z_2|X_{1}=x, U_1=u] $ is
	\begin{align*}
		&\sum_{(y,a)\in {\cal X}\times {\cal U}} \ln \frac{\pi_1(a|y)P_1(y|x,u)}{\pi_0(a|y)P_0(y|x,u)}\pi_1(a|y)P_1(y|x,u),\\
	   =&\underbrace{\sum_y\ln \frac{P_1(y|x,u)}{P_0(y|x,u)}P_1(y|x,u)}_{=D(P_1(x,u),P_0(x,u))} \\
	     &\qquad + \sum_{y} P_1(y|x,u) \underbrace{\sum_a\ln \frac{\pi_1(a|y)}{\pi_0(a|y)}\pi_1(a|y)}_{=D(\pi_1(y), \pi_0(y))}.
	\end{align*}
	Observe now that $\sum_{x,u,y} D(\pi_1(y), \pi_0(y))P_1(y|x,u) \mu_1^{\pi_1}(x)$ is equal to
	\begin{align*}
	&\sum_{x,y} D(\pi_1(y), \pi_0(y))P_1^{\pi_1}(y|x)\mu_1^{\pi_1}(x)\\
	=&
	\sum_{y} D(\pi_1(y), \pi_0(y)) \underbrace{\sum_{x}P_1^{\pi_1}(y|x)\mu_1^{\pi_1}(x)}_{=\mu_1^{\pi_1}(y)}.
	\end{align*}
	Then, the result follows from the fact that  for a stationary distribution $\mu_1^{\pi_1}$ it holds that $\sum_{x}P_1^{\pi_1}(y|x)\mu_1^{\pi_1}(x)=\mu_1^{\pi_1}(y)$.
\end{proof}
Theorem \ref{theorem:privacy_full_info}, as well as the other theorems and propositions in this paper, can be established for general state-action spaces, and therefore it is quite general. A first important consequence of Theorem \ref{theorem:privacy_full_info} is that when $\pi_0$ and $\pi_1$ are different deterministic policies, then the absolute continuity condition is not met and  the level of privacy is 0 (since the actions reveal the change point to the eavesdropper). Hence, there is a price to pay to get a non-zero level of privacy. 
Theorem \ref{theorem:privacy_full_info} also allows us to compute the policies  maximizing the level of privacy (or equivalently minimizing $I_F(\pi_0,\pi_1)$):
\begin{proposition}\label{proposition:lower_bound_privay}
The best level of privacy in the full-information case is given by $\underline{I}_F = \inf_{\pi_0,\pi_1} I_F(\pi_0,\pi_1)$ that can be computed by solving the following linear program
\begin{equation}
\begin{aligned}\label{eq:min_c1_full_information}
\min_{\xi \in\Delta({\cal X}\times {\cal U})} \quad  &\sum_{x,u} \xi_{x,u}  D(P_1(x,u),P_0(x,u))\\ \textrm{s.t.} \quad  &\sum_{u} \xi_{*,u}^\top P_1(u) =\sum_{u} \xi_{*,u}^\top
\end{aligned}
\end{equation}
where $P_1(u)$ is a $|{\cal X}|\times |{\cal X}|$ matrix containing the transition probabilities for  action $u$ in MDP $M_1$. The policies achieving $\underline{I}_F$ are given by $\pi_1(u|x) = \xi_{x,u}/\|\xi_{x,*}\|_1$ and $\pi_0=\pi_1$.
\end{proposition}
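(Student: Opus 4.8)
The plan is to peel off the two minimizations one at a time, using Theorem~\ref{theorem:privacy_full_info} to reduce the whole problem to a linear program over stationary state-action distributions of $M_1$.

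First I would dispose of the minimization over $\pi_0$. For a fixed $\pi_1$, the first term in \ref{eq:I_full_information} does not involve $\pi_0$ at all, while the second term $\mathbb{E}_{x\sim\mu_1^{\pi_1}}[D(\pi_1(x),\pi_0(x))]$ is nonnegative and vanishes precisely when $\pi_0(x)=\pi_1(x)$ for every $x\in\supp(\mu_1^{\pi_1})$. The choice $\pi_0=\pi_1$ meets this condition and trivially satisfies the absolute-continuity hypothesis $\pi_1(x)\ll\pi_0(x)$ of Theorem~\ref{theorem:privacy_full_info}(i), hence it is admissible and optimal. Therefore $\inf_{\pi_0}I_F(\pi_0,\pi_1)=\mathbb{E}_{x\sim\mu_1^{\pi_1},\,u\sim\pi_1(x)}[D(P_1(x,u),P_0(x,u))]$, and $\underline{I}_F$ equals the infimum of this expression over $\pi_1$.

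Next I would perform the occupation-measure reparametrization. Setting $\xi_{x,u}:=\mu_1^{\pi_1}(x)\pi_1(u|x)$ turns the remaining objective into the linear functional $\sum_{x,u}\xi_{x,u}D(P_1(x,u),P_0(x,u))$, which is finite since $P_1(x,u)\ll P_0(x,u)$ and the spaces are finite. Stationarity of $\mu_1^{\pi_1}$ under $\pi_1$ in $M_1$ is exactly the balance constraint $\sum_u\xi_{*,u}^\top P_1(u)=\sum_u\xi_{*,u}^\top$, and $\xi\in\Delta({\cal X}\times{\cal U})$ automatically, so every admissible $\pi_1$ yields a feasible $\xi$. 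Conversely, given a feasible $\xi$, I would set $\nu(x)=\|\xi_{x,*}\|_1$ and $\pi_1(u|x)=\xi_{x,u}/\|\xi_{x,*}\|_1$; the balance constraint shows $\nu$ is a stationary distribution of the chain induced by $\pi_1$ on $M_1$, and ergodicity (uniqueness of the stationary distribution) forces $\nu=\mu_1^{\pi_1}$, so $\xi$ is precisely the occupation measure of $\pi_1$. The one point requiring care is that $\|\xi_{x,*}\|_1>0$ for all $x$ so that this policy is well defined: this follows from the strong ergodicity assumption, because the extreme points of the feasible polytope are the occupation measures of deterministic policies, each inducing an irreducible chain and hence having full state-marginal support, and any convex combination of such points inherits this positivity.

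Finally, with this bijection the infimum over $\pi_1$ coincides with the minimum of $\sum_{x,u}\xi_{x,u}D(P_1(x,u),P_0(x,u))$ over the polytope in \ref{eq:min_c1_full_information}. This is a linear objective over a nonempty compact polytope, so the minimum is attained at some $\xi^\star$; the policy $\pi_1^\star(u|x)=\xi^\star_{x,u}/\|\xi^\star_{x,*}\|_1$ together with $\pi_0^\star=\pi_1^\star$ then realizes the value $\underline{I}_F$, which simultaneously proves the LP characterization and the stated form of the optimal policies. I expect the occupation-measure correspondence of the previous paragraph—in particular the positivity of the state marginals at feasible points, needed to recover a bona fide ergodic policy—to be the only genuinely delicate step; the reduction over $\pi_0$ and the linear-programming argument are routine.
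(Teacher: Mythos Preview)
Your proposal is correct and follows essentially the same two-step approach as the paper: first eliminate $\pi_0$ by observing that the KL term in \ref{eq:I_full_information} vanishes at $\pi_0=\pi_1$, then reparametrize the remaining minimization over $\pi_1$ via the occupation measure $\xi_{x,u}=\mu_1^{\pi_1}(x)\pi_1(u|x)$ to obtain the LP. You supply considerably more detail than the paper does---in particular the converse direction of the policy/occupation-measure correspondence and the positivity of the state marginals---but the underlying argument is the same.
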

\begin{proof}
Observe that for any $\pi_1$ the infimum of $I_F(\pi_0,\pi_1)$ over $\pi_0$ is simply $\pi_0=\pi_1$. Therefore the problem becomes to minimize $\mathbb{E}_{u\sim \pi_1(x), x\sim \mu_1^{\pi_1}}\left[D(P_1(x,u),P_0(x,u))\right]$ over $\pi_1$.  Let $\xi \in \Delta({\cal X}\times {\cal U})$ be a distribution over the states and the actions. We can equivalently rewrite $\mathbb{E}_{u\sim \pi_1(x), x\sim \mu_1^{\pi_1}}\left[D(P_1(x,u),P_0(x,u))\right]$ through a change of variables $\xi_{x,u} = \pi_1(u|x)\mu_1^{\pi_1}(x)$, subject to the affine constraint $\sum_{u} \xi_{*,u}^\top P_1(u) =\sum_{u} \xi_{*,u}^\top$ that guarantees stationarity of the distribution. The result follows from this rewriting.
\end{proof} 
\noindent Alternatively, it is possible to compute the best level of privacy  $\underline{I}_F$ by solving an MDP $({\cal X},{\cal U},P_1,r)$ with  reward function $r(x,u)=-D(P_1(x,u), P_0(x,u))$.

\subsection{Privacy-utility trade-off} 
Next we investigate the utility-privacy trade-off by studying the solution of the optimization problem (\ref{eq:optpb}) for different values of $\lambda$. We denote the objective function by:
\begin{equation}
 V_F(\rho,\lambda,\pi_0,\pi_1)=    V(\rho,\pi_0,\pi_1) - \lambda I_F(\pi_0,\pi_1).
\end{equation}


Note that we may be interested in optimizing just $\pi_1$, the policy after the change, i.e., solve $\sup_{\pi_1} V_{M_1}(\pi_1) -\lambda I_F(\pi_0,\pi_1)$ for some fixed $\pi_0$ (where $\pi_0$ may be the optimal policy in $M_0$ for example). This problem corresponds to $\rho=0$ in (\ref{eq:optpb}), and hence is  just a special case in our analysis. In the following theorem, we show that solving the problem is equivalent to minimizing a difference of convex functions under convex constraints, and is hence a concave minimization problem.

\begin{theorem} \label{theorem:utility_privacy_problem_full_information}
The solution to $\sup_{\pi_0,\pi_1} V_F(\rho,\lambda,\pi_0,\pi_1)$ is obtained by solving:
\begin{equation}\label{eq:utility_privacy_problem_full_information}
\begin{aligned}
\min_{(\gamma,\xi^0,\xi^1)\in \Omega} \quad & \gamma-  \lambda\sum_{x} \|\xi_{x,*}^1\|_1\ln\frac{\|\xi_{x,*}^1\|_1}{\|\xi_{x,*}^0\|_1} \\
\textrm{s.t.} \quad & \sum_u (\xi_{*,u}^i)^\top P_i(u) =\sum_{u} (\xi_{*,u}^i)^\top \quad i=0,1\\
& \sum_{x,u}\lambda f(x,u,\xi^0,\xi^1) - q(x,u,\rho,\xi^0, \xi^1)\leq \gamma
\end{aligned}
\end{equation}
where $\Omega =\mathbb{R}\times \Delta({\cal X}\times {\cal U})\times \Delta({\cal X}\times {\cal U})$, and
\begin{small}
\begin{align*}
f(x,u,\xi^0,\xi^1)&=\xi_{x,u}^1D(P_1(x,u),P_0(x,u))+\xi_{x,u}^1\ln \frac{\xi_{x,u}^1}{\xi_{x,u}^0},\\
q(x,u,\rho,\xi^0, \xi^1)&=\rho\xi_{x,u}^1 r_1(x,u) +(1-\rho)\xi_{x,u}^0 r_0(x,u),
\end{align*}
\end{small}
and by choosing $\pi_i(u|x) = \xi_{x,u}^i/\|\xi_{x,*}^i\|_1$ for $i=0,1$.
\end{theorem}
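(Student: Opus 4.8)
The plan is to reparametrise the problem by stationary state--action occupation measures, substitute the resulting expressions into Theorem~\ref{theorem:privacy_full_info}, and then linearise the single term that obstructs a convex form by introducing an epigraph variable. For each $i\in\{0,1\}$ I would invoke the standard fact that, for ergodic MDPs, the map $\pi_i\mapsto\xi^i$ given by $\xi^i_{x,u}=\mu_i^{\pi_i}(x)\,\pi_i(u|x)$ is a bijection between policies and elements $\xi^i\in\Delta({\cal X}\times{\cal U})$ satisfying the balance equation $\sum_u(\xi^i_{*,u})^\top P_i(u)=\sum_u(\xi^i_{*,u})^\top$; ergodicity forces $\|\xi^i_{x,*}\|_1=\mu_i^{\pi_i}(x)>0$ for every $x$, so the inverse $\pi_i(u|x)=\xi^i_{x,u}/\|\xi^i_{x,*}\|_1$ is well defined. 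Under this change of variables $V_{M_i}^{\pi_i}=\sum_{x,u}\xi^i_{x,u}r_i(x,u)$, so the utility $V(\rho,\pi_0,\pi_1)$ becomes the linear functional of $(\xi^0,\xi^1)$ that the statement calls $\sum_{x,u}q(x,u,\rho,\xi^0,\xi^1)$.

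Next I would substitute $\mu_1^{\pi_1}(x)=\|\xi^1_{x,*}\|_1$ and $\pi_j(u|x)=\xi^j_{x,u}/\|\xi^j_{x,*}\|_1$ into \ref{eq:I_full_information}. The model term becomes $\sum_{x,u}\xi^1_{x,u}D(P_1(x,u),P_0(x,u))$ immediately, while the policy term expands as
\begin{align*}
\mathbb{E}_{x\sim\mu_1^{\pi_1}}\!\left[D(\pi_1(x),\pi_0(x))\right]
&=\sum_{x,u}\xi^1_{x,u}\ln\frac{\xi^1_{x,u}\,\|\xi^0_{x,*}\|_1}{\xi^0_{x,u}\,\|\xi^1_{x,*}\|_1}\\
&=\sum_{x,u}\xi^1_{x,u}\ln\frac{\xi^1_{x,u}}{\xi^0_{x,u}}-\sum_{x}\|\xi^1_{x,*}\|_1\ln\frac{\|\xi^1_{x,*}\|_1}{\|\xi^0_{x,*}\|_1},
\end{align*}
so that $I_F(\pi_0,\pi_1)=\sum_{x,u}f(x,u,\xi^0,\xi^1)-\sum_x\|\xi^1_{x,*}\|_1\ln\frac{\|\xi^1_{x,*}\|_1}{\|\xi^0_{x,*}\|_1}$ with $f$ exactly as in the statement. (If $\xi^1_{x,u}>0=\xi^0_{x,u}$ for some $(x,u)$, both sides are $+\infty$, matching Theorem~\ref{theorem:privacy_full_info}(ii).)

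Combining, $\sup_{\pi_0,\pi_1}V_F(\rho,\lambda,\pi_0,\pi_1)$ equals the supremum over $(\xi^0,\xi^1)$ in the balance polytope of $\sum_{x,u}\big(q-\lambda f\big)+\lambda\sum_x\|\xi^1_{x,*}\|_1\ln\frac{\|\xi^1_{x,*}\|_1}{\|\xi^0_{x,*}\|_1}$. Negating and introducing a scalar $\gamma$ with the constraint $\sum_{x,u}\big(\lambda f-q\big)\le\gamma$ — which is tight at every minimiser, since $\gamma$ enters the objective with a positive coefficient — turns this into precisely \ref{eq:utility_privacy_problem_full_information}, and the optimal policies are read off from $\pi_i(u|x)=\xi^i_{x,u}/\|\xi^i_{x,*}\|_1$. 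It then remains to check the advertised structure: the balance equations are affine; $q$ is linear; $\xi^1_{x,u}\ln(\xi^1_{x,u}/\xi^0_{x,u})$ is the perspective of the convex function $t\mapsto t\ln t$, hence jointly convex, so $f$ and the inequality constraint are convex and the feasible set of \ref{eq:utility_privacy_problem_full_information} is convex; in the objective, $\gamma$ is linear and $\sum_x\|\xi^1_{x,*}\|_1\ln\frac{\|\xi^1_{x,*}\|_1}{\|\xi^0_{x,*}\|_1}$ is again a sum of perspectives of $t\mapsto t\ln t$ composed with linear maps, hence convex, so the objective is a difference of convex functions and therefore concave — i.e.\ \ref{eq:utility_privacy_problem_full_information} is a concave (DC) minimisation over a convex set.

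I expect the crux to be the rewriting of the policy-divergence term in the second paragraph: the correction $-\sum_x\|\xi^1_{x,*}\|_1\ln\frac{\|\xi^1_{x,*}\|_1}{\|\xi^0_{x,*}\|_1}$ that appears there is exactly the concave piece that prevents a plain convex reformulation, so getting it — and its sign — right is the real content of the theorem. The remaining work is routine: the occupation-measure bijection and its well-definedness under ergodicity, the epigraph reformulation, and the absolute-continuity/support caveats inherited from Theorem~\ref{theorem:privacy_full_info}.
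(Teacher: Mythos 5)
Your proposal is correct and follows essentially the same route as the paper's proof: the occupation-measure change of variables $\xi^i_{x,u}=\mu_i^{\pi_i}(x)\pi_i(u|x)$ with the balance constraints, the decomposition of $\mathbb{E}_{x\sim\mu_1^{\pi_1}}[D(\pi_1(x),\pi_0(x))]$ into $\sum_{x,u}\xi^1_{x,u}\ln(\xi^1_{x,u}/\xi^0_{x,u})$ minus the marginal relative-entropy term, and the epigraph variable $\gamma$ to isolate the convex part in the constraint. You supply somewhat more detail than the paper (the perspective-function convexity checks and the absolute-continuity caveat), but the argument is the same.
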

\begin{proof}
Observe that the problem is equivalent to
$
\min_{\pi_0,\pi_1} -\rho V_{M_1}(\pi_1)-(1-\rho)V_{M_0}(\pi_0) + \lambda I_F(\pi_0,\pi_1)
$. Through a change of variable  $\xi_{x,a}^i=\pi_i(a|x)\mu_i^{\pi_i}(x)$, as in Proposition \ref{proposition:lower_bound_privay}, the problem becomes
\begin{equation*}
\resizebox{\hsize}{!}{%
$
\begin{aligned}
\min_{\xi^0,\xi^1} \quad & \sum_{x,a} -\rho \xi_{x,a}^1r_1(x,a) - (1-\rho) \xi_{x,a}^0r_0(x,a) +\lambda I_F(\pi_0,\pi_1)\\
\textrm{s.t.} \quad & \sum_{a} (\xi_{*,a}^i)^\top P_i(a) =\sum_{a} (\xi_{*,a}^i)^\top \quad i=0,1.
\end{aligned}$
}
\end{equation*}
Note now that $\mathbb{E}_{x\sim \mu_1^{\pi_1}}\left[D(\pi_1(x), \pi_0(x))\right]$ in $I_F$ is equivalent to 
$\sum_{x,u}\xi_{x,u}^1 \left[\ln \frac{\xi_{x,u}^1}{\xi_{x,a}^0} - \ln\frac{\|\xi_{x,*}^1\|_1}{\|\xi_{x,*}^0\|_1} \right],$
that is the difference of two convex functions. Consequently, the original objective is a difference of convex functions. Define $f,g$ as in the statement of the theorem. The problem can rewritten as a concave program with convex constraint by introducing an additional parameter $\gamma\in \mathbb{R}$, with constraint $\sum_{x,u}\lambda f(x,u,\xi^0,\xi^1) - q(x,u,\rho,\xi^0, \xi^1)\leq \gamma$\ifdefined\shortpaper .\else
.
\fi
\end{proof}
Problem (\ref{eq:utility_privacy_problem_full_information}) can be solved using 
methods from DC programming (Difference of Convex functions). Note, however, that there are specific instances of (\ref{eq:utility_privacy_problem_full_information}) that could be convex. This happens when $D(\pi_1(x),\pi_0(x))$ is constant for all $x$, 
or if we impose the additional constraint $\pi_0=\pi_1$. The latter constraint appears if $\rho=1$, in which case the problem is equivalent to solving an MDP  $({\cal X}, {\cal U},P_1, r_1^\lambda)$ with modified reward $r_1^\lambda(x,u)= r_1(x,u)-\lambda D(P_1(x,u), P_0(x,u))$.

We have a few additional remarks to make regarding Theorem \ref{theorem:utility_privacy_problem_full_information}. The term $-  \sum_{x} \|\xi_{x,*}^1\|_1\ln\frac{\|\xi_{x,*}^1\|_1}{\|\xi_{x,*}^0\|_1}$ can be interpreted as the negative KL-divergence between the two stationary distributions $-D(\mu_1^{\pi_1}, \mu_0^{\pi_0})$. This term causes the problem to be concave. Solutions of (\ref{eq:utility_privacy_problem_full_information})  favor distributions $\xi^0,\xi^1$ that are close to each other in the KL-divergence sense. As a consequence, in case $r_0=r_1$,  the solutions of (\ref{eq:utility_privacy_problem_full_information}) will hardly depend on $\rho$.  To see this, let $\delta_{x,u} \coloneqq \xi^{1}_{x,u}-\xi^{0}_{x,u}$ and notice that the following equality holds $\rho V_{M_1}(\pi_1) + (1-\rho)V_{M_0}(\pi_0)= \sum_{x,u} r_0(x,u) (\rho \delta_{x,u} + \xi_{x,u}^0)$. The KL-divergence is an upper bound of the total variation distance. It follows that a small KL-divergence between $\xi_1$ and $\xi_0$ implies a small value of $\delta_{x,u}$ in the absolute sense for all $(x,u)$, and thus a small dependence on $\rho$. 
\section{Limited-information case}\label{sec:limited_information}

We now analyze the limited-information case, where the eavesdropper has access to the states $\{X_t\}_{t\ge 1}$ only. 

\subsection{Privacy level}

As in the full-information case, we can  characterize $I_L$. Theorem \ref{theorem:privacy_full_info}. Unfortunately, it is not possible to have a separation of the KL-divergences between the models and the policies as in the full-information case.
\begin{theorem}\label{theorem:privacy_limited_info}
(i) If for all $x \in \supp(\mu_1^{\pi_1})$, $P_1^{\pi_1}(x)\ll P_0^{\pi_0}(x)$, then the sequence of observations $\{Y_t\}_{t\ge 1}$ , with $Y_t=X_t$, satisfies Assumption 1, and we have: 
\begin{equation}\label{eq:I_limited_information}
I_L(\pi_0,\pi_1) = \mathbb{E}_{x\sim \mu_1^{\pi_1}}\left[D\left(P_1^{\pi_1}(x), P_0^{\pi_0}(x)\right)\right].
\end{equation}
 Furthermore, $I_L(\pi_0,\pi_1) \leq I_F(\pi_0,\pi_1)$. 
 \begin{equation} \label{eq:lower_bound_IL}
 \mathbb{E}_{x\sim \mu_1^{\pi_1}}\left[\sup_{x'}d\left(P_1^{\pi_1}(x'|x), P_0^{\pi_0}(x'|x)\right)\right] \leq  I_L(\pi_0,\pi_1).
 \end{equation}
 (ii) If $\exists x\in \supp(\mu_1^{\pi_1}): P_1^{\pi_1}(x)\nll P_0^{\pi_0}(x)$  then $I_L=\infty$.
\end{theorem}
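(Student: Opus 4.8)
The plan is to rerun the argument of Theorem \ref{theorem:privacy_full_info} with the observation pair $(X_t,U_t)$ replaced by $X_t$ and the open-loop data $(P_i,\pi_i)$ replaced by the closed-loop kernels $P_i^{\pi_i}(x'|x)=\sum_u\pi_i(u|x)P_i(x'|x,u)$. Under $\pi_i$ the state sequence $\{X_t\}$ is itself an ergodic Markov chain with transition matrix $P_i^{\pi_i}$ and stationary law $\mu_i^{\pi_i}$, so the conditional density of $X_t$ given $(X_1,\dots,X_{t-1})$ is $P_0^{\pi_0}(X_t|X_{t-1})$ for $t<\nu$ and $P_1^{\pi_1}(X_t|X_{t-1})$ for $t\ge\nu$, and hence, for $t\ge 2$,
\[
Z_t=\ln\frac{f_1(X_t|X_{t-1})}{f_0(X_t|X_{t-1})}=\ln\frac{P_1^{\pi_1}(X_t|X_{t-1})}{P_0^{\pi_0}(X_t|X_{t-1})},
\]
which is well defined precisely when $P_1^{\pi_1}(x)\ll P_0^{\pi_0}(x)$ on $\supp(\mu_1^{\pi_1})$. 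By ergodicity, $n^{-1}\sum_{t=\nu}^{\nu+n}Z_t$ converges a.s., independently of $X_{\nu-1}$ and hence of ${\cal F}_{\nu-1}$, to $\sum_x\mu_1^{\pi_1}(x)\sum_{x'}P_1^{\pi_1}(x'|x)\ln\frac{P_1^{\pi_1}(x'|x)}{P_0^{\pi_0}(x'|x)}=\mathbb{E}_{x\sim\mu_1^{\pi_1}}[D(P_1^{\pi_1}(x),P_0^{\pi_0}(x))]$; the uniform-concentration clause of Assumption \ref{assump:lai_assumption} is obtained from the finiteness and ergodicity of this chain exactly as in the proof of Theorem \ref{theorem:privacy_full_info}. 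This gives \ref{eq:I_limited_information} and that $I$ does not depend on $\nu$. Part (ii) then copies Theorem \ref{theorem:privacy_full_info}(ii): if $P_1^{\pi_1}(x)\nll P_0^{\pi_0}(x)$ for some $x\in\supp(\mu_1^{\pi_1})$, pick $x'$ with $P_1^{\pi_1}(x'|x)>0=P_0^{\pi_0}(x'|x)$; the ergodic chain takes the transition $x\to x'$ with positive asymptotic frequency, so $n^{-1}\sum_{t=\nu}^{\nu+n}Z_t\to+\infty$, Assumption \ref{assump:lai_assumption} fails, and we put $I_L=\infty$ (i.e.\ the privacy level is $0$).

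For the inequality $I_L(\pi_0,\pi_1)\le I_F(\pi_0,\pi_1)$ the plan is to combine the chain rule for the KL divergence with the data-processing inequality, pointwise in the pre-transition state. Fix $x\in\supp(\mu_1^{\pi_1})$ and let $\mathbb{Q}_i^x$ denote the law of $(U,X')$ obtained by drawing $U\sim\pi_i(x)$ and $X'\sim P_i(x,U)$. Chain rule along the factorization $(U,X')$ gives $D(\mathbb{Q}_1^x,\mathbb{Q}_0^x)=D(\pi_1(x),\pi_0(x))+\mathbb{E}_{u\sim\pi_1(x)}[D(P_1(x,u),P_0(x,u))]$, and taking the expectation over $x\sim\mu_1^{\pi_1}$ recovers exactly the two terms of $I_F$ in \ref{eq:I_full_information}. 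Since $P_i^{\pi_i}(\cdot|x)$ is the $X'$-marginal of $\mathbb{Q}_i^x$, data processing (marginalization) yields $D(P_1^{\pi_1}(x),P_0^{\pi_0}(x))\le D(\mathbb{Q}_1^x,\mathbb{Q}_0^x)$ for every $x$, and averaging gives $I_L\le I_F$. The $\infty$-bookkeeping is consistent: if $I_F=\infty$ the bound is trivial, whereas $I_F<\infty$ forces $\pi_1(x)\ll\pi_0(x)$ and $P_1(x,u)\ll P_0(x,u)$ on the relevant support, hence $P_1^{\pi_1}(x)\ll P_0^{\pi_0}(x)$, so part (i) applies and both sides are finite reals.

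For the lower bound \ref{eq:lower_bound_IL} the plan is to apply data processing once more, now to the two-cell partition $\{x'\},\ {\cal X}\setminus\{x'\}$: for fixed $x,x'$, pushing $P_1^{\pi_1}(\cdot|x)$ and $P_0^{\pi_0}(\cdot|x)$ forward through $\indi\{\cdot=x'\}$ produces Bernoulli laws with means $P_1^{\pi_1}(x'|x)$ and $P_0^{\pi_0}(x'|x)$, so $d(P_1^{\pi_1}(x'|x),P_0^{\pi_0}(x'|x))\le D(P_1^{\pi_1}(x),P_0^{\pi_0}(x))$. Taking the supremum over $x'$ on the left (the right-hand side is independent of $x'$) and then the expectation over $x\sim\mu_1^{\pi_1}$ yields \ref{eq:lower_bound_IL}.

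Overall the argument is a rerun of Theorem \ref{theorem:privacy_full_info} plus two textbook uses of the data-processing inequality, so no step is deep; the points demanding real care are (a) the uniform-concentration clause of Assumption \ref{assump:lai_assumption} for the closed-loop chain, which — as in the full-information case — I would settle by a standard large-deviation bound for finite ergodic Markov chains rather than write out, and (b) keeping the absolute-continuity / $\infty$ conventions for $I_L$ and $I_F$ aligned so that \ref{eq:I_limited_information} and $I_L\le I_F$ are asserted on the right domains.
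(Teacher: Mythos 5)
Your proposal is correct and follows essentially the same route as the paper: the same ergodic limit of the closed-loop log-likelihood ratios for \ref{eq:I_limited_information}, and the same Bernoulli data-processing argument (coarsening to the two-cell partition $\{x'\},\ {\cal X}\setminus\{x'\}$ and optimizing over $x'$) for \ref{eq:lower_bound_IL}. For $I_L\le I_F$ the paper applies the log-sum inequality directly to the inner sum, while you phrase it as chain rule plus data processing under marginalization of $U$ from the joint law of $(U,X')$ — these are the same inequality in different packaging, and your explicit handling of the $\infty$ conventions is a harmless (and slightly more careful) addition.
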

\begin{proof}
\ifdefined\shortpaper The proof of $I_L= \mathbb{E}_{x\sim \mu_1^{\pi_1}}\left[D\left(P_1^{\pi_1}(x), P_0^{\pi_0}(x)\right)\right]$ is omitted for simplicity. The former inequality $I_L(\pi_0,\pi_1)\leq I_F(\pi_0,\pi_1)$ follows from an application of the log-sum inequality. The latter  is a  consequence of the  fundamental data processing inequality \cite{garivier2019explore}, where one has $D\left(P_1^{\pi_1}(x), P_0^{\pi_0}(x)\right) \geq d\left(\mathbb{E}_{P_1^{\pi_1}(x)}[Z], \mathbb{E}_{P_0^{\pi_0}(x)}[Z]\right)$ for a measurable random variable $Z$. By choosing $Z$ as the event of transitioning from $x$ to $x'$, and optimizing over $x'$, concludes the proof.
\else
We prove (i) and the bounds of $I_L(\pi_0,\pi_1)$. For $Y_t= (X_t)$ eq. \ref{eq:llr} becomes
	$Z_i = \ln \frac{P_1^{\pi_1}(X_i|X_{i-1})}{P_0^{\pi_0}(X_i|X_{i-1})}$
	for $i\in \{2,\dots, t\}$. The limit $\lim_{n\to\infty}n^{-1}\sum_{t=\nu}^{t+\nu}Z_t$ converges to
	\begin{align*}
	I_L(\pi_0,\pi_1)&=\sum_{x\in {\cal X}} \mu_1^{\pi_1}(x) \sum_{y\in {\cal X}}P_1^{\pi_1}(y|x) \ln \frac{P_1^{\pi_1}(y|x)}{P_0^{\pi_0}(y|x)}.
	\end{align*}
	where the inner term is just the KL-divergence between $P_1^{\pi_1}(\cdot|x)$ and $P_1^{\pi_1}(\cdot|x)$, thus 	$I_1(\pi_1,\pi_0) = \mathbb{E}_{x\sim \mu_1^{\pi_1}}\left[D\left(P_1^{\pi_1}(x), P_0^{\pi_0}(x)\right)\right].$
	To prove  the inequality just apply the log-sum inequality on the inner term in $I_L$. 
	\begin{align*}\sum_{y\in {\cal X}}P_1^{\pi_1}(y|x) \ln \frac{P_1^{\pi_1}(y|x)}{P_0^{\pi_0}(y|x)}&\leq\\
 \sum_{y,u} P_1(y|x,u)&\pi_1(u|x)\ln \frac{P_1(y|x,u)\pi_1(u|x)}{P_0(y|x,u)\pi_0(u|x)}.
	 \end{align*} Compare now the new expression with the one in theorem \ref{theorem:privacy_full_info} to see that it is equal to $I_F(\pi_0,\pi_1)$. 
	 The last inequality  is a  consequence of the fundamental data processing inequality \cite{garivier2019explore}, where one has $D\left(P_1^{\pi_1}(x), P_0^{\pi_0}(x)\right) \geq d\left(\mathbb{E}_{P_1^{\pi_1}(x)}[Z], \mathbb{E}_{P_0^{\pi_0}(x)}[Z]\right)$ for a measurable random variable $Z$. By choosing $Z$ as the event of transitioning from $x$ to $x'$, and optimizing over $x'$, concludes the proof.
 \fi
\end{proof}
\noindent Note that since we assume that for all $(x,u)$, $P_1(x,u)\ll P_0(x,u)$, the condition to get a finite $I_L(\pi_0,\pi_1)$ holds if $\pi_1\ll \pi_0$ (but this is not a necessary condition). In addition, as expected, the limited information case yields a higher privacy level than the full information scenario. Further observe that the lower bound in (\ref{eq:lower_bound_IL}) is tighter than $\min_x D\left(P_1^{\pi_1}(x), P_0^{\pi_0}(x)\right)$, and can be used to upper bound the privacy level $I_L^{-1}$.  
However, computing policies  that attain the best level of achievable privacy is more challenging compared to the full-information case. The fact that it is not possible to separate the policies and the models in Theorem \ref{theorem:privacy_limited_info} as we did in Theorem \ref{theorem:privacy_full_info} implies that we cannot use the trick to optimize only over $\pi_1$ to find the best level of privacy.   As a consequence, it turns out that  finding the best level of achievable privacy becomes a concave problem, in general. 
\begin{proposition}\label{proposition:lower_bound_privay_limited_information}
The best level of privacy in the limited-information case is given by $\underline{I}_L = \inf_{\pi_0,\pi_1} I_L(\pi_0,\pi_1)$ that can be computed by solving the following concave program
\begin{equation}\label{eq:12}
	\begin{aligned}
		&\min_{(\gamma,\alpha,\xi^1)\in\Omega'} \quad  \gamma -\sum_{x}\|\xi_{x,*}^1\|_1\ln \|\xi_{x,*}^1\|_1\\
		&\textrm{s.t.} \quad  \sum_{u} (\xi_{*,u}^1)^\top P_1(u) =\sum_{u} (\xi_{*,u}^1)^\top,\\
		& \qquad \sum_{x,y} \left(\sum_{u} P_1(y|x,u)\xi_{x,u}^1\right) \ln \frac{\sum_{u'} P_1(y|x,u')\xi_{x,u'}^1}{\sum_{u'} P_0(y|x,u')\alpha_{x,u'}}\leq \gamma.
	\end{aligned}
\end{equation}
where $\Omega'=\mathbb{R}\times \Delta({\cal U})^{{\cal X}}\times \Delta({\cal X}\times {\cal U})$. 
\end{proposition}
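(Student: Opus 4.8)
The plan is to reuse the occupation-measure reformulation of Proposition~\ref{proposition:lower_bound_privay}, with one essential difference: since $P_0\neq P_1$ we can no longer argue that the optimal $\pi_0$ equals $\pi_1$, so $\pi_0$ must be retained as a free decision variable. First I would introduce $\xi^1_{x,u}:=\pi_1(u|x)\mu_1^{\pi_1}(x)$, the occupation measure of the post-change chain, which ranges over $\Delta({\cal X}\times{\cal U})$ subject to the stationarity constraint $\sum_u(\xi^1_{*,u})^\top P_1(u)=\sum_u(\xi^1_{*,u})^\top$; and $\alpha_{x,u}:=\pi_0(u|x)$, which ranges over $\Delta({\cal U})^{\cal X}$ and enters $I_L$ only through $P_0^{\pi_0}(y|x)=\sum_u P_0(y|x,u)\alpha_{x,u}$. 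As in Proposition~\ref{proposition:lower_bound_privay}, for $x\in\supp(\mu_1^{\pi_1})$ one then recovers $\pi_1(u|x)=\xi^1_{x,u}/\|\xi^1_{x,*}\|_1$ and $\pi_0(u|x)=\alpha_{x,u}$, the values off $\supp(\mu_1^{\pi_1})$ being irrelevant to $I_L$.

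Next I would rewrite $I_L(\pi_0,\pi_1)$ from Theorem~\ref{theorem:privacy_limited_info} in these variables. Using $\mu_1^{\pi_1}(x)=\|\xi^1_{x,*}\|_1$ and $\mu_1^{\pi_1}(x)P_1^{\pi_1}(y|x)=\sum_u P_1(y|x,u)\xi^1_{x,u}=:a_{x,y}$, and writing $b_{x,y}:=\sum_u P_0(y|x,u)\alpha_{x,u}=P_0^{\pi_0}(y|x)$, a short computation (factoring $\mu_1^{\pi_1}(x)$ out of the logarithm) gives
\begin{equation*}
I_L=\sum_{x,y}a_{x,y}\ln\frac{a_{x,y}}{b_{x,y}}-\sum_x\Big(\sum_y a_{x,y}\Big)\ln\Big(\sum_y a_{x,y}\Big),
\end{equation*}
where in the last term $\sum_y a_{x,y}=\sum_u\xi^1_{x,u}=\|\xi^1_{x,*}\|_1$. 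This displays $I_L$ as a difference of two convex functions of $(\xi^1,\alpha)$: the first sum is convex because $(p,q)\mapsto p\ln(p/q)$ is jointly convex on $\mathbb{R}_{\ge0}^2$ (with the usual boundary conventions) and $a_{x,y},b_{x,y}$ are linear in $(\xi^1,\alpha)$; the second is convex because $t\mapsto t\ln t$ is convex and $\|\xi^1_{x,*}\|_1$ is linear. Introducing an epigraph variable $\gamma$ for the first sum, the problem becomes minimizing $\gamma-\sum_x\|\xi^1_{x,*}\|_1\ln\|\xi^1_{x,*}\|_1$ — linear minus convex, hence concave — over the convex set cut out by the stationarity constraint, the simplex constraints built into $\Omega'$, and $\sum_{x,y}a_{x,y}\ln(a_{x,y}/b_{x,y})\leq\gamma$; this is precisely \ref{eq:12}.

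The one genuine obstacle is to make sure the reduction over $\pi_0$ is faithful and the DC decomposition is legitimate. On the first point, one must check that $\inf_{\pi_0}I_L(\pi_0,\pi_1)$ is attained at a point with finite objective — equivalently that \ref{eq:12} admits a feasible point with finite $\gamma$ — which holds because the standing assumption $P_1(x,u)\ll P_0(x,u)$ for all $(x,u)$ guarantees that any $\pi_0$ of full support yields $P_1^{\pi_1}(x)\ll P_0^{\pi_0}(x)$ on $\supp(\mu_1^{\pi_1})$, hence $I_L<\infty$; the configurations where no choice of $\pi_0$ keeps $I_L$ finite are exactly the $I_L=\infty$ cases of Theorem~\ref{theorem:privacy_limited_info}, which are never minimizers. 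On the second point, the subtlety is that the ``masses'' $\sum_y a_{x,y}=\mu_1^{\pi_1}(x)$ and $\sum_y b_{x,y}=1$ need not coincide, but joint convexity of $(p,q)\mapsto p\ln(p/q)$ does not require this; moreover, recombining the two sums recovers the per-state form $\sum_x\mu_1^{\pi_1}(x)D(P_1^{\pi_1}(x),P_0^{\pi_0}(x))\geq0$, confirming that the concave objective is bounded below on the feasible set. Everything else is the same bookkeeping as in Proposition~\ref{proposition:lower_bound_privay}.
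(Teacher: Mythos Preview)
Your proposal is correct and follows essentially the same approach as the paper: the same occupation-measure change of variables $\xi^1_{x,u}=\pi_1(u|x)\mu_1^{\pi_1}(x)$ and $\alpha=\pi_0$, the same DC splitting of $I_L$ into the convex term $\sum_{x,y}a_{x,y}\ln(a_{x,y}/b_{x,y})$ and the concave term $-\sum_x\|\xi^1_{x,*}\|_1\ln\|\xi^1_{x,*}\|_1$, and the same epigraph trick with $\gamma$. Your added checks on feasibility (full-support $\pi_0$ yields finite $I_L$ under the standing absolute-continuity assumption) and on boundedness below of the concave objective are careful touches the paper omits, but they do not change the route.
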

\begin{proof}
Similarly to the full-information case we perform a change of variable so that the problem becomes a minimization over state-action distributions. \begin{align*}
	I_L(\pi_0,\pi_1)&=\sum_{x\in {\cal X}} \mu_1^{\pi_1}(x) \sum_{y\in {\cal X}}P_1^{\pi_1}(y|x) \ln \frac{P_1^{\pi_1}(y|x)}{P_0^{\pi_0}(y|x)}.
\end{align*}
Let $\xi_{x,u}^1 = \pi_1(u|x)\mu_{1}^{\pi_1}(x)$, and denote the policy $\pi_0$ by $\alpha \in \Delta({\cal U})^{{\cal X}}$. Thus $I_L(\pi_0,\pi_1)$ is equivalent to
\begin{small}
\begin{align*}
	&=\sum_{x,y} \left(\sum_{u} P_1(y|x,u)\xi_{x,u}^1\right) \ln \frac{\sum_{u'} P_1(y|x,u')\xi_{x,u'}^1}{\|\xi_{x,*}^1\|_1\sum_{u'} P_0(y|x,u')\alpha_{x,u'}}\\
	&=\underbrace{-\sum_{x,y} \left(\sum_{u} P_1(y|x,u)\xi_{x,u}^1\right) \ln \|\xi_{x,*}^1\|_1}_{(u)}\\
	&\qquad+ \underbrace{\sum_{x,y} \left(\sum_{u} P_1(y|x,u)\xi_{x,u}^1\right) \ln \frac{\sum_{u'} P_1(y|x,u')\xi_{x,u'}^1}{\sum_{u'} P_0(y|x,u')\alpha_{x,u'}}}_{(b)}.
\end{align*}\end{small}
Note that (a) is equal to $- \sum_{x}\|\xi_{x,*}^1\|_1\ln \|\xi_{x,*}^1\|_1$. One can conclude that the expression is a difference of convex functions. Consequently it is possible to use the same approach as in Theorem \ref{theorem:utility_privacy_problem_full_information}
\ifdefined\shortpaper to get the result.
\else
. We can rewrite the problem as (\ref{eq:12}) by introducing  an additional variable $\gamma \in \mathbb{R}$.
\fi
\end{proof}
\noindent As already mentioned (\ref{eq:12}) may be hard to solve, but there are still some instances where it corresponds to a convex program. This is the case if $D(P_1^{\pi_1}(x), P_0^{\pi_0}(x))$ does not depend on $x$. Alternatively, consider the inequality $I_L(\pi_0, \pi_1) \leq \max_x D\left(P_1^{\pi_1}(x), P_0^{\pi_0}(x)\right)$. Minimizing the right-hand side over $(\pi_0,\pi_1)$ is a convex problem, and can be used as an approximation to $\inf_{\pi_0,\pi_1} I_L(\pi_0,\pi_1)$. As a final remark, note that contrarily to Proposition \ref{proposition:lower_bound_privay}, it is not necessarily true that at the infimum of $I_L(\pi_0,\pi_1)$, the two policies coincide.
\subsection{Privacy-utility trade-off} 
\noindent We end this section by providing a way to compute policies that maximize utility and privacy in the limited-information case. The concave program to be solved is, for the most part, similar to the one solved in Theorem \ref{theorem:utility_privacy_problem_full_information}, with the only difference being the privacy term that appears in the constraint. 
\begin{theorem} \label{theorem:utility_privacy_problem_limited_information}
	Let $\rho \in[0,1], \lambda\geq 0$ and define 
	$
	V_L(\rho,\lambda,\pi_0,\pi_1)=   V(\rho,\pi_0,\pi_1) - \lambda I_L(\pi_0,\pi_1)$. The solution to $\sup_{\pi_0,\pi_1} V_L(\rho,\lambda,\pi_0,\pi_1)$ is obtained by solving
	\begin{equation}\label{eq:utility_privacy_problem_limited_information}
	\resizebox{\hsize}{!}{%
		$
	\begin{aligned}
	\min_{(\gamma,\xi^0,\xi^1)\in \Omega} \quad & \gamma-  \lambda\sum_{x}\|\xi_{x,*}^1\|_1\ln \frac{\|\xi_{x,*}^1\|_1}{\|\xi_{x,*}^0\|_1}\\
	\textrm{s.t.} \quad & \sum_{u} (\xi_{*,u}^i)^\top P_i(u) =\sum_{u} (\xi_{*,u}^i)^\top \quad i=0,1\\
	& \sum_{x}\left(\lambda f(x,\xi_0,\xi_1) -\sum_{u}q(x,u,\rho,\xi^0,\xi^1) \right)\leq \gamma
	\end{aligned}$}
	\end{equation}
	where $\Omega $ and $q$ are as in Theorem \ref{theorem:utility_privacy_problem_full_information}  and \[
	\resizebox{\hsize}{!}{%
	$
	f(x,\xi_0,\xi_1)=\sum_y \left(\sum_{u} P_1(y|x,u)\xi_{x,u}^1\right) \ln \frac{\sum_{u'} P_1(y|x,u')\xi_{x,u'}^1}{\sum_{u'} P_0(y|x,u')\xi_{x,u'}^0},$}
	\]
	and by choosing $\pi_i(u|x) = \xi_{x,u}^i/\|\xi_{x,*}^i\|_1$ for $i=0,1$.
\end{theorem}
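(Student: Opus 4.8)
The plan is to mirror the proof of \cref{theorem:utility_privacy_problem_full_information}, combining it with the change of variables and difference-of-convex decomposition already developed in \cref{proposition:lower_bound_privay_limited_information}. First I would rewrite $\sup_{\pi_0,\pi_1} V_L(\rho,\lambda,\pi_0,\pi_1)$ as the minimization of $-\rho V_{M_1}(\pi_1) - (1-\rho)V_{M_0}(\pi_0) + \lambda I_L(\pi_0,\pi_1)$, and then apply the change of variables $\xi^i_{x,u} = \pi_i(u|x)\mu_i^{\pi_i}(x)$ for $i=0,1$. As in \cref{proposition:lower_bound_privay}, this is a bijection between policies (together with their induced stationary distributions) and pairs of state–action distributions satisfying the stationarity constraints $\sum_u (\xi^i_{*,u})^\top P_i(u) = \sum_u (\xi^i_{*,u})^\top$; under it the ergodic rewards become the affine expressions $\sum_{x,u} \xi^i_{x,u} r_i(x,u)$, so the reward part of the objective is exactly $-\sum_{x,u} q(x,u,\rho,\xi^0,\xi^1)$.

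Next I would re-express $I_L$ in the new variables. Substituting $\pi_0(u|x) = \xi^0_{x,u}/\|\xi^0_{x,*}\|_1$ and $\pi_1(u|x) = \xi^1_{x,u}/\|\xi^1_{x,*}\|_1$ into the formula of \cref{theorem:privacy_limited_info}, and using $\sum_y \sum_u P_1(y|x,u)\xi^1_{x,u} = \|\xi^1_{x,*}\|_1$ to collapse the normalizing factors, one obtains $I_L = \sum_x f(x,\xi^0,\xi^1) - \sum_x \|\xi^1_{x,*}\|_1 \ln\frac{\|\xi^1_{x,*}\|_1}{\|\xi^0_{x,*}\|_1}$ with $f$ as in the statement. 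The structural claim to establish is that each of these two sums is jointly convex in $(\xi^0,\xi^1)$: the bivariate map $(p,q)\mapsto p\ln(p/q)$ is jointly convex on $\mathbb{R}_{>0}^2$ (with the usual lower-semicontinuous extension $0\ln(0/q)=0$, $p\ln(p/0)=+\infty$), each of $\sum_{u'}P_1(y|x,u')\xi^1_{x,u'}$, $\sum_{u'}P_0(y|x,u')\xi^0_{x,u'}$, $\|\xi^1_{x,*}\|_1$, $\|\xi^0_{x,*}\|_1$ is affine in $(\xi^0,\xi^1)$, and a jointly convex function precomposed with an affine map stays jointly convex; summing over $x$ and $y$ preserves convexity. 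Hence $\lambda I_L$, and with the affine reward part the whole objective, is a difference of two convex functions.

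Finally, as in \cref{theorem:utility_privacy_problem_full_information}, I would push the concave part into an epigraph variable: introduce $\gamma\in\mathbb{R}$ with the constraint $\sum_x\big(\lambda f(x,\xi^0,\xi^1) - \sum_u q(x,u,\rho,\xi^0,\xi^1)\big)\le \gamma$, whose left-hand side is convex, so the constraint is convex; the objective becomes $\gamma - \lambda\sum_x \|\xi^1_{x,*}\|_1\ln\frac{\|\xi^1_{x,*}\|_1}{\|\xi^0_{x,*}\|_1}$, which is affine minus convex, hence concave, over $\Omega$ intersected with the two stationarity equalities and the epigraph inequality — a convex set. This is exactly \ref{eq:utility_privacy_problem_limited_information}, and the optimal policies are recovered via the inverse change of variables $\pi_i(u|x) = \xi^i_{x,u}/\|\xi^i_{x,*}\|_1$. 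The main obstacle is the joint-convexity verification of $f$ and of the stationary-distribution KL term — that is, pinning down the difference-of-convex structure — together with the boundary of $\Delta(\mathcal{X}\times\mathcal{U})$ where some $\|\xi^i_{x,*}\|_1$ may vanish; this is harmless on the relevant domain, since in ergodic MDPs $\|\xi^i_{x,*}\|_1 = \mu_i^{\pi_i}(x)>0$ and the absolute-continuity hypotheses of \cref{theorem:privacy_limited_info} keep the log-ratios finite.
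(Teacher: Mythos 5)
Your proposal is correct and follows essentially the same route as the paper, which proves this result by invoking the argument of \cref{theorem:utility_privacy_problem_full_information} together with the difference-of-convex decomposition of $I_L$ from \cref{proposition:lower_bound_privay_limited_information}: change of variables to occupancy measures, splitting $I_L$ into the convex term $\sum_x f$ minus the stationary-distribution KL term, and moving the convex part into an epigraph constraint. Your explicit verification of joint convexity via the perspective-type function $(p,q)\mapsto p\ln(p/q)$ composed with affine maps, and the remark on positivity of $\|\xi^i_{x,*}\|_1$ under ergodicity, are details the paper leaves implicit but are consistent with its argument.
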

\begin{proof}
The proof is along the same lines as that of Theorem \ref{theorem:utility_privacy_problem_full_information} by making use of the decomposition of $I_L$ shown in Proposition \ref{proposition:lower_bound_privay_limited_information}.
\end{proof}

\section{Examples and numerical results}\label{sec:simulations}
 \noindent  We implemented a library\footnote{The code and instructions to run the simulations can be found at \url{https://github.com/rssalessio/PrivacyStochasticSystems}.} built on top of the DCCP library  \cite{shen2016disciplined} to  solve the concave problems presented above. Due to space constraints, we restrict our attention to specific  linear systems and a simple MDP with three states.

\subsection{Additive changes in linear systems}
\noindent  Consider the following linear model $x_{t+1} = Ax_t + Bu_t + F\theta \indi_{\{t\geq \nu\}} + w_t$ where $x_t\in \mathbb{R}^n$ is the state, $u_t\in \mathbb{R}^m$ is the control signal, and $w_t\in \mathbb{R}^n$ is a white noise sequence with $0$ mean and covariance $Q$. The parameter $\theta \in \mathbb{R}^k$ models the exogenous input, unknown to the eavesdropper.\\

\noindent\textbf{Full information.} In this case, the best level of privacy is obtained with $\pi_0=\pi_1$, and we can prove that it does not depend on $\pi_0$. This is a simple consequence of Proposition 1 and the fact that 
$
D(P_1(x,u), P_0(x,u))=(1/2)\theta^\top F^\top Q^{-1} F\theta=\underline{I}_F.
$
In turn, the privacy level depends solely on the signal-to-noise ratio (SNR) $\frac{1}{2}\theta^\top F^\top Q^{-1} F\theta$, which increases as the minimum eigenvalue of $Q$ increases. This result agrees with the conclusions of \cite{alisic2020ensuring}.

 Next, to investigate the privacy-utility trade-off, we assume the columns of $B$ are linearly independent and that there exists  $K \in \mathbb{R}^{n\times m}$ such that $A+BK$ is a Schur matrix. We assume the reward function is  $r(x,u) = -x^\top x$. To shorten the notation, we use the following definitions: $L\coloneqq I-A-BK$, $E\coloneqq  (L^{-1})^{\top}L^{-1}$.
\begin{proposition}\label{proposition:full_info_utility_privacy_linear_case}
Suppose the control laws are of the type $u_t = Kx_t+ \beta_t^0$ for $t<\nu$, and $u_t=Kx_t+\beta_t^1$ otherwise, where $\beta_t^i$ is i.i.d. white Gaussian noise distributed according to $\mathcal{N}(\alpha_i, R)$, with $R \succ 0$. Then, the utility-privacy value function is 
	\begin{align*}
		V_F(\rho, \lambda,\alpha_0,\alpha_1)&=-(1-\rho)\alpha_0^\top B^\top EB\alpha_0 -\tr(\Sigma)\\
		&\quad - \rho(B\alpha_1+F\theta)^\top E(B\alpha_1+F\theta)\\
		&\quad -\lambda \frac{c_\theta+ (\alpha_1-\alpha_0)^\top R^{-1}(\alpha_1-\alpha_0)}{2},
	\end{align*}
	where $c_\theta=\theta^\top F^\top Q^{-1} F\theta$ and $\Sigma$ satisfies the Lyapunov equation $\Sigma = (A+BK)\Sigma (A+BK)^\top + Q+BRB^\top$. For $\lambda >0, \rho\in[0,1]$, the solutions to $\max_{\alpha_0,\alpha_1} V_F(\rho,\lambda,\alpha_0,\alpha_1)$ are given by
	\begin{align*}
		\alpha_0^\star(\rho,\lambda) &= -\rho\left(B^\top T(\rho,\lambda )B \right)^{-1}B^\top E F\theta,\\
		\alpha_1^\star(\rho,\lambda)&=\left( 2\frac{1-\rho}{\lambda} RB^\top EB +I\right)\alpha_0^\star(\rho,\lambda),
	\end{align*}
	where $T(\rho,\lambda) =  \left(I  + \frac{2(1-\rho)\rho}{\lambda} EBRB^\top\right)E$. Moreover, the solution to $\max_{\alpha_1} V_F(1,\lambda,\alpha_1, 0)$ is given by $\alpha_1^\star(\lambda) = -\left(RB^\top E B + \frac{\lambda}{2}I\right)^{-1}RB^\top E F\theta$.
\end{proposition}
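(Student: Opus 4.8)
The plan is to reduce the statement to Gaussian bookkeeping. First I would write out the two closed loops: for $t<\nu$, $x_{t+1}=(A+BK)x_t+B\alpha_0+B\eta_t+w_t$, and for $t\ge\nu$, $x_{t+1}=(A+BK)x_t+B\alpha_1+F\theta+B\eta_t+w_t$, where $\eta_t\sim\mathcal{N}(0,R)$ is the centered part of $\beta_t^i$. Since $A+BK$ is Schur, $L=I-A-BK$ is invertible, hence $E=(L^{-1})^\top L^{-1}\succ0$, and each closed loop admits a unique Gaussian stationary law: mean $L^{-1}B\alpha_0$ before the change, mean $L^{-1}(B\alpha_1+F\theta)$ after, and in both regimes covariance $\Sigma$ solving $\Sigma=(A+BK)\Sigma(A+BK)^\top+Q+BRB^\top$ (the mean shift does not enter the covariance recursion). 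The initialization assumed in \cref{sec:body} places the system exactly at these stationary laws, Assumption~\ref{assump:lai_assumption} holds in this stable linear-Gaussian setting, and $\pi_1(x)\ll\pi_0(x)$ because both are Gaussians with the common nonsingular covariance $R$; thus \cref{theorem:privacy_full_info} applies.

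Next I would evaluate the three pieces entering $V_F=V(\rho,\pi_0,\pi_1)-\lambda I_F(\pi_0,\pi_1)$. Using $\mathbb{E}\|x\|^2=\|\text{mean}\|^2+\tr\Sigma$ and $r_i(x,u)=-x^\top x$ gives $V_{M_0}^{\pi_0}=-\alpha_0^\top B^\top EB\alpha_0-\tr\Sigma$ and $V_{M_1}^{\pi_1}=-(B\alpha_1+F\theta)^\top E(B\alpha_1+F\theta)-\tr\Sigma$; since $\Sigma$ is common to both regimes, the convex combination of the two rewards contributes $-\tr\Sigma$ plus the two quadratic terms. For $I_F$, the decomposition of \cref{theorem:privacy_full_info} splits into a model term and a policy term: $P_1(x,u)=\mathcal{N}(Ax+Bu+F\theta,Q)$ and $P_0(x,u)=\mathcal{N}(Ax+Bu,Q)$ have equal covariance and mean gap $F\theta$, so $D(P_1(x,u),P_0(x,u))=\tfrac12\theta^\top F^\top Q^{-1}F\theta=\tfrac12 c_\theta$ for every $(x,u)$; and $\pi_i(x)=\mathcal{N}(Kx+\alpha_i,R)$ have equal covariance $R$ and mean gap $\alpha_1-\alpha_0$ (the $Kx$ part cancels), so $D(\pi_1(x),\pi_0(x))=\tfrac12(\alpha_1-\alpha_0)^\top R^{-1}(\alpha_1-\alpha_0)$, independent of $x$. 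The expectations over $\mu_1^{\pi_1}$ are then trivial and $I_F=\tfrac12\big(c_\theta+(\alpha_1-\alpha_0)^\top R^{-1}(\alpha_1-\alpha_0)\big)$; assembling the pieces yields exactly the stated $V_F(\rho,\lambda,\alpha_0,\alpha_1)$.

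For the maximization I would first observe that, with the columns of $B$ independent and $E\succ0$, the matrix $B^\top EB$ is positive definite, so $V_F$ is a strictly concave quadratic in $(\alpha_0,\alpha_1)$ and its unique maximizer is its stationary point. Setting $\nabla_{\alpha_0}V_F=0$ gives $\lambda R^{-1}(\alpha_1-\alpha_0)=2(1-\rho)B^\top EB\alpha_0$, i.e. $\alpha_1=\big(I+\tfrac{2(1-\rho)}{\lambda}RB^\top EB\big)\alpha_0$, which is the claimed relation $\alpha_1^\star$ vs.\ $\alpha_0^\star$. Setting $\nabla_{\alpha_1}V_F=0$ gives $2\rho B^\top E(B\alpha_1+F\theta)+\lambda R^{-1}(\alpha_1-\alpha_0)=0$; I would use the first condition to replace $\lambda R^{-1}(\alpha_1-\alpha_0)$ by $2(1-\rho)B^\top EB\alpha_0$ and then replace $\alpha_1$ by its expression in $\alpha_0$, at which point the coefficient of $\alpha_0$ collapses to $B^\top EB+\tfrac{2\rho(1-\rho)}{\lambda}B^\top EBRB^\top EB=B^\top\big(I+\tfrac{2\rho(1-\rho)}{\lambda}EBRB^\top\big)EB=B^\top T(\rho,\lambda)B$, which is positive definite by the same full-rank argument, so $\alpha_0^\star=-\rho\big(B^\top T(\rho,\lambda)B\big)^{-1}B^\top EF\theta$.

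Finally, for the last claim I would set $\rho=1$ and freeze $\alpha_0=0$: $V_F$ then depends on $\alpha_1$ alone, its stationarity condition reads $2B^\top E(B\alpha_1+F\theta)+\lambda R^{-1}\alpha_1=0$, and left-multiplying by $R$ and rearranging gives $\alpha_1^\star(\lambda)=-\big(RB^\top EB+\tfrac{\lambda}{2}I\big)^{-1}RB^\top EF\theta$. The work here is not conceptual but organizational: the two genuine points of care are (i) invoking \cref{theorem:privacy_full_info} and Assumption~\ref{assump:lai_assumption} for this continuous-state linear-Gaussian model, which the paper has already argued is legitimate, and (ii) carrying out the elimination between the two first-order conditions cleanly enough to recognize the operator $T(\rho,\lambda)$ and its invertibility; the hypotheses that $B$ has full column rank and $R\succ0$ are used precisely in the concavity/invertibility checks for $B^\top EB$ and $B^\top T(\rho,\lambda)B$.
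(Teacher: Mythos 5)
Your proposal is correct and follows essentially the same route as the paper's proof: compute the stationary Gaussian laws of the two closed loops, evaluate the rewards via $\mathbb{E}\|x\|^2=\|m\|^2+\tr\Sigma$, obtain $I_F$ from the constant Gaussian KL terms, and solve the two first-order conditions by expressing $\alpha_1$ in terms of $\alpha_0$ and eliminating the $\lambda R^{-1}(\alpha_1-\alpha_0)$ term (your substitution is algebraically the same as the paper's step of summing the two gradients). The only additions are cosmetic: you spell out the Gaussian KL computations that the paper states without derivation, and you make the concavity/invertibility checks explicit.
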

\ifdefined\shortpaper The proof is omitted for simplicity.
\else
\begin{proof}
	If $A+BK$ is Schur, then the system converges to a stationary distribution before and after the change. Specifically, the two distributions are $\mu_0 \sim \mathcal{N}(L^{-1}B\alpha_0, \Sigma)$ and $\mu_1\sim \mathcal{N}(L^{-1}(B\alpha_1+F\theta), \Sigma)$, where $\Sigma$  satisfies the Riccati equations
	$
	\Sigma = (A+BK)\Sigma (A+BK)^\top + Q+BRB^\top
	$.
	Therefore the value of the policy before and after the change is
	$V_{M_0}(\alpha_0) = -\mathbb{E}_{x\sim\mu_0}[x^\top x]$ and  $V_{M_1}(\alpha_1) = -\mathbb{E}_{x\sim\mu_1}[x^\top x]$. For a normal random variable $y\sim \mathcal{N}(m,W)$ it holds that $\mathbb{E}[y^\top y] = m^\top m+\tr(W)$. Therefore we have
	$V_{M_0}(\alpha_0) = -\alpha_0^\top B^\top EB\alpha_0-\tr(\Sigma)$
	and
	$ V_{M_1}(\alpha_1)=-(B\alpha_1+F\theta)^\top E (B\alpha_1+F\theta)-\tr(\Sigma)$.
	The information value on the other hand is
	\[I_F = \frac{\overbrace{\theta^\top F^\top Q^{-1} F\theta}^{c_\theta} + (\alpha_1-\alpha_0)^\top R^{-1}(\alpha_1-\alpha_0)}{2}. \]
	Then  $-V_F(\rho, \lambda,\alpha_0,\alpha_1)$ is convex, and maximizing $V_F$ is equivalent to minimizing $-V_F$. Taking the gradient of $V_F$ with respect to $\alpha_0,\alpha_1$ yields
	\begin{align*}
			-\nabla_{\alpha_{0}} V_F&=2(1-\rho) B^\top EB\alpha_0 -\lambda R^{-1}(\alpha_1-\alpha_0)\\
			-\nabla_{\alpha_{1}} V_F&=2\rho B^\top E(B\alpha_1+F\theta)+\lambda R^{-1}(\alpha_1-\alpha_0).
	\end{align*}
	$\rho=0$ implies $\alpha_0=\alpha_1$ and $\alpha_0=0$ since $B^\top EB$ is full rank. $\rho=1$, similarly, implies $\alpha_0=\alpha_1$ and $B^\top EB\alpha_1=-B^\top EF\theta$, hence $\alpha_1 = -(B^\top EB)^{-1}B^\top EF\theta$. For the general case using the first equation one can write the following expression for $\alpha_1$
	\begin{align*}
		&\left( 2\frac{1-\rho}{\lambda} RB^\top EB +I\right)\alpha_0 =\alpha_1.
	\end{align*}
	Now consider $-(\nabla_{\alpha_{1}} V_F+\nabla_{\alpha_{0}} V_F)=0$ and plug in the expression found for $\alpha_1$
	\begin{align*}
		&(1-\rho) B^\top EB\alpha_0 \\
		&\qquad +\rho B^\top E\left(B\left( 2\frac{1-\rho}{\lambda} RB^\top EB +I\right)\alpha_0+F\theta\right)=0
	\end{align*}
	that is also equal to
	\begin{align*}
			(1-\rho) B^\top EB\alpha_0 +\rho B^\top EB&\left( 2\frac{1-\rho}{\lambda} RB^\top EB +I\right)\alpha_0\\
			&\qquad =-\rho B^\top E F\theta.
	\end{align*}
	Then, we can conclude that the left-hand side is equal to
	\[B^\top \left[(1-\rho) E  + \frac{2(1-\rho)\rho}{\lambda} EBRB^\top E  + \rho E\right]B\alpha_0.\]
	Let now $T(\rho,\lambda) =  \left(I  + \frac{2(1-\rho)\rho}{\lambda} EBRB^\top\right)E $, hence
	\[ \alpha_0 = -\rho\left(B^\top T(\rho,\lambda )B \right)^{-1}B^\top E F\theta,\]
	from which follows also the expression for $\alpha_1$. 
	Finally, notice that the solution to $\max_{\alpha_1} V_F(1,\lambda,\alpha_1, 0)$ can be easily derived by using the equation $-\nabla_{\alpha_1} V_F(1,\lambda,\alpha_1, 0)=2B^\top E (B\alpha_1+F\theta)+\lambda R^{-1}\alpha_1=0$.
\end{proof}\fi
\noindent  Proposition \ref{proposition:full_info_utility_privacy_linear_case} uses stochastic policies to ensure absolute continuity of the policies. Consequently, one can optimize over the mean of the policy while keeping fixed the covariance term $R$. The larger the eigenvalues of $R$, the better it is in terms of privacy (but worse performance).\\
\begin{figure}[t]
	\centering
	\includegraphics[width=\columnwidth]{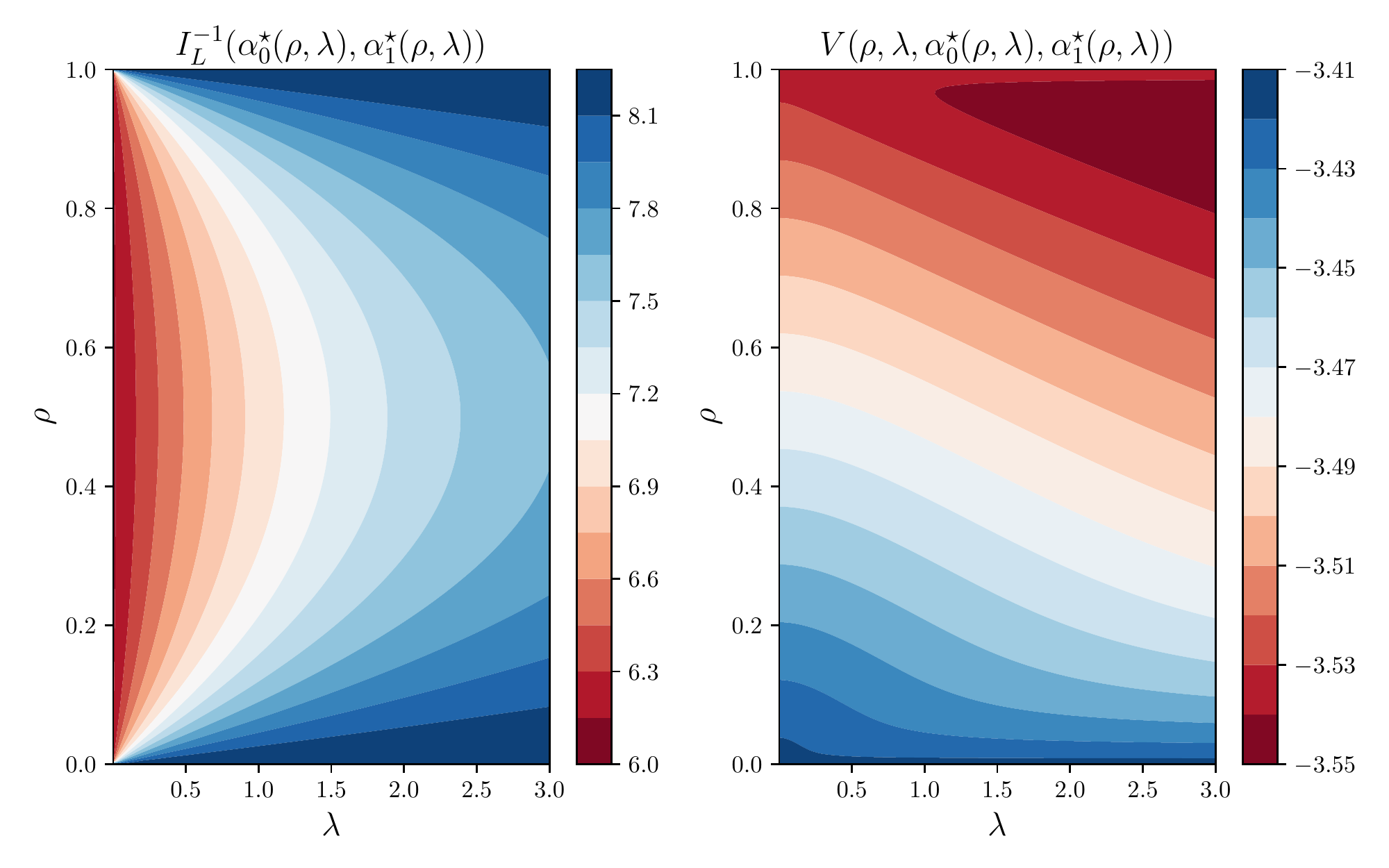}
	\caption{Limited information case (example in \ref{eq:linear_sys_example}): plot of $V$ and $I_L^{-1}$ as function of the optimal solutions $\alpha_0^\star, \alpha_1^\star$. The larger the values (in blue), the better.}
	\label{fig:linear_sys_limited_case}
\end{figure}

\noindent \textbf{Limited information.}  To find the best privacy level in the limited information case, we exploit the presence of process noise to just consider deterministic policies. 
Let the policies be described by $u_t = g_0(x_t) + \Delta g(x_t)$ for $t\geq\nu$ and $u_t=g_0(x_t)$ for $t<\nu$, where $g_0,\Delta g$  are deterministic mappings from $\mathbb{R}^n\to \mathbb{R}^m$.
\ifdefined\shortpaper
\else
Therefore it follows that the two densities are  $P_1^{\pi_1}(x'|x)=\mathcal{N}(Ax+Bg_1(x)+F\theta, Q)$ and $P_0^{\pi_0}=\mathcal{N}(Ax+Bg_0(x), Q)$.
Consequently, we obtain that the KL-divergence is
\begin{equation*}
\int_{\mathbb{R}^n}P_1^{\pi_1}(x'|x)\ln \frac{P_1^{\pi_1}(x'|x)}{P_0^{\pi_0}(x'|x)} \textrm{d}x'=\frac{1}{2}h(\theta,x)^\top Q^{-1} h(\theta, x),
\end{equation*}
where $h(\theta,x) = F\theta + B\Delta g(x).$ 
\fi
One easily deduce that the infimum of $I_L$ is attained for $\Delta g =- (B^\top Q B)^{-1}B^\top QF\theta$, which means that at the minimum, the difference in the control laws does not depend on $x$, and that the minimum is attained for a control law $g_1(x)$ that cancels out the effect of the additive change. Hence $\underline{I}_L = \frac{1}{2}\theta^\top F^\top G^\top Q^{-1}GF\theta$ where $G=I- B(B^\top QB)^{-1}B^\top Q$.

 Next, we investigate the privacy-utility trade-off. As previously mentioned, we consider deterministic policies.
 Define $\tilde B_T(M) \coloneqq ( B^\top (M + T) B)^{-1}B^\top M$ for any symmetric invertible matrix $M\in \mathbb{R}^{n\times n}$ and symmetric semi-positive definite matrix $T$. Then, we have  the following result.
\begin{proposition}\label{proposition:limited_info_utility_privacy_linear_case}
Consider the limited-information case.  Consider deterministic control laws of the type $u_t=Kx_t+\alpha_0$ for $t<\nu$ and $u_t=Kx_t+\alpha_1$ for $t\geq \nu$. The utility-privacy value function $V_L$ is
\begin{align*}
	V_L(&\rho,\lambda,\alpha_0,\alpha_1)=-(1-\rho)\alpha_0^\top B^\top EB\alpha_0 -\tr(\Sigma)\\
	& - \rho(B\alpha_1+F\theta)^\top E(B\alpha_1+F\theta)\\
	& -\lambda \frac{1}{2}\left(F\theta + B(\alpha_1-\alpha_0)\right)^\top Q^{-1} \left(F\theta + B(\alpha_1-\alpha_0)\right),
\end{align*}
  where $\Sigma$ satisfies  $\Sigma = (A+BK)\Sigma (A+BK)^\top + Q$.  For $\lambda>0, \rho\in[0,1],$ the solutions $\alpha_0^\star(\rho,\lambda)$ and $\alpha_1^\star(\rho,\lambda)$ are 
	\begin{align*}
	\alpha_1^\star(\rho,\lambda) &=-\tilde{B}_0\left( (1-\rho) EB\tilde{B}_{2(1-\rho) E/\lambda}(Q^{-1}) +\rho E\right)F \theta,\\
	\alpha_0^\star(\rho,\lambda) &=  \tilde{B}_{2(1-\rho) E/\lambda}(Q^{-1})(F\theta+ B\alpha_1^\star(\rho,\lambda))
\end{align*}
that simplify to $\alpha_0=0,\alpha_1=-\tilde{B}_0(Q^{-1})F\theta$ if $\rho=0$, $\alpha_0=\tilde{B}_0(Q^{-1})(I -B\tilde{B}_0(E))F\theta,\alpha_1=-\tilde{B}_0(E)F\theta$ if $\rho=1$. Moreover, the solution to $\max_{\alpha_1} V_L(1,\lambda,\alpha_1, 0)$ is given by $\alpha_1^\star(\lambda) = -\tilde{B}_0\left(\frac{2}{\lambda}E+Q^{-1}\right)F\theta$.
\end{proposition}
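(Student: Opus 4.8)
The plan is to follow the proof of Proposition~\ref{proposition:full_info_utility_privacy_linear_case}, with the randomization now supplied by the process disturbance $w_t$ instead of by a stochastic policy. First I would note that under $u_t = Kx_t + \alpha_i$ the closed loop reads $x_{t+1} = (A+BK)x_t + B\alpha_i + F\theta\,\indi_{\{t\ge\nu\}} + w_t$, and since $A+BK$ is Schur it converges, before and after the change, to the Gaussian stationary laws $\mu_0 = \mathcal{N}(L^{-1}B\alpha_0,\Sigma)$ and $\mu_1 = \mathcal{N}(L^{-1}(B\alpha_1+F\theta),\Sigma)$, where $\Sigma$ solves the discrete Lyapunov equation $\Sigma = (A+BK)\Sigma(A+BK)^\top + Q$ (there is no $BRB^\top$ term here, the policies being deterministic). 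Using $\mathbb{E}_{y\sim\mathcal{N}(m,W)}[y^\top y] = m^\top m + \tr W$ together with $r(x,u)=-x^\top x$ and $E=(L^{-1})^\top L^{-1}$ then yields $V_{M_0}(\alpha_0) = -\alpha_0^\top B^\top EB\alpha_0 - \tr\Sigma$ and $V_{M_1}(\alpha_1) = -(B\alpha_1+F\theta)^\top E(B\alpha_1+F\theta) - \tr\Sigma$.

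Next I would apply Theorem~\ref{theorem:privacy_limited_info}. The one-step kernels are $P_1^{\pi_1}(\cdot|x) = \mathcal{N}((A+BK)x + B\alpha_1 + F\theta,\,Q)$ and $P_0^{\pi_0}(\cdot|x) = \mathcal{N}((A+BK)x + B\alpha_0,\,Q)$; being Gaussians with the common nonsingular covariance $Q$ they are mutually absolutely continuous (so Theorem~\ref{theorem:privacy_limited_info}(i) applies), and $D(P_1^{\pi_1}(x),P_0^{\pi_0}(x)) = \tfrac12\,\Delta m^\top Q^{-1}\Delta m$ with $\Delta m = F\theta + B(\alpha_1-\alpha_0)$. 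This is independent of $x$, hence the expectation over $\mu_1^{\pi_1}$ is vacuous and $I_L$ equals that quadratic form. Substituting into $V_L = \rho V_{M_1} + (1-\rho)V_{M_0} - \lambda I_L$ and combining the two $\tr\Sigma$ contributions (which carry weights $\rho$ and $1-\rho$) produces the stated expression for $V_L$.

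For the optimizers, I would argue that $-V_L$ is a convex quadratic in $(\alpha_0,\alpha_1)$ --- each summand is, up to sign, $\alpha_0^\top B^\top EB\alpha_0$, $(B\alpha_1+F\theta)^\top E(B\alpha_1+F\theta)$ or $(F\theta+B(\alpha_1-\alpha_0))^\top Q^{-1}(F\theta+B(\alpha_1-\alpha_0))$, all convex --- and that full column rank of $B$ together with $E\succ 0$, $Q^{-1}\succ 0$ makes it strictly convex, so the unique maximizer is the unique zero of the gradient. Writing out $\nabla_{\alpha_0}V_L=0$ and $\nabla_{\alpha_1}V_L=0$ gives two linear equations; their sum cancels the $B^\top Q^{-1}(\cdot)$ terms and leaves $(1-\rho)B^\top EB\alpha_0 + \rho B^\top E(B\alpha_1+F\theta)=0$, while the $\alpha_0$-equation, after collecting the $\alpha_0$-terms and dividing by $\lambda$, gives $\alpha_0 = \tilde B_{2(1-\rho)E/\lambda}(Q^{-1})(F\theta+B\alpha_1)$. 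Substituting this into the summed equation and solving for $\alpha_1$ yields $\alpha_1^\star = -\tilde B_0\big((1-\rho)EB\tilde B_{2(1-\rho)E/\lambda}(Q^{-1}) + \rho E\big)F\theta$. The announced special cases follow by specializing these two equations: $\rho=0$ forces $2B^\top EB\alpha_0=0$, i.e.\ $\alpha_0=0$ since $B^\top EB$ is invertible, and then $\alpha_1=-\tilde B_0(Q^{-1})F\theta$; $\rho=1$ makes the $\alpha_0$-equation read $B^\top Q^{-1}(F\theta+B(\alpha_1-\alpha_0))=0$, which decouples it and gives $\alpha_1=-\tilde B_0(E)F\theta$ and $\alpha_0=\tilde B_0(Q^{-1})(I-B\tilde B_0(E))F\theta$ via $\tilde B_0(Q^{-1})B=I$; and $\max_{\alpha_1}V_L(1,\lambda,\alpha_1,0)$ comes from $2B^\top E(B\alpha_1+F\theta)+\lambda B^\top Q^{-1}(F\theta+B\alpha_1)=0$, i.e.\ $\alpha_1=-\tilde B_0(\tfrac2\lambda E+Q^{-1})F\theta$. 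The main obstacle is not conceptual but clerical: pushing the matrix algebra of the elimination through cleanly, recognizing the outcome in the compact $\tilde B_T(M)$ notation, and checking that the matrices to be inverted --- notably $B^\top(\tfrac{2(1-\rho)}{\lambda}E+Q^{-1})B$ and $B^\top((1-\rho)EB\tilde B_{2(1-\rho)E/\lambda}(Q^{-1})+\rho E)B$ --- are nonsingular under the standing hypotheses.
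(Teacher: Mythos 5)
Your proposal is correct and follows essentially the same route as the paper: compute the Gaussian stationary laws and the resulting quadratic value terms, observe that $I_L=\tfrac12(F\theta+B(\alpha_1-\alpha_0))^\top Q^{-1}(F\theta+B(\alpha_1-\alpha_0))$ is state-independent, then set the two gradients to zero, eliminate $\alpha_0$ via the $\tilde B_T(M)$ notation, and specialize to $\rho\in\{0,1\}$ and to $\max_{\alpha_1}V_L(1,\lambda,\alpha_1,0)$. Your added remarks on strict convexity and on the nonsingularity of the matrices being inverted are minor explicit justifications the paper leaves implicit.
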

\ifdefined\shortpaper The proof is omitted for simplicity.
\else
\begin{proof}
	The first part of the proof is identical to the one in Proposition \ref{proposition:full_info_utility_privacy_linear_case}. Now, one can  find out that  $I_F$ is
	\[I_F = \frac{\left(F\theta + B(\alpha_1-\alpha_0)\right)^\top Q^{-1} \left(F\theta + B(\alpha_1-\alpha_0)\right)}{2}. \]
	 Taking the gradient of $V_F$ with respect to $\alpha_0,\alpha_1$ yields
	\begin{align*}
		-\nabla_{\alpha_{0}} V_F&=2(1-\rho) B^\top EB\alpha_0\\
		&\qquad -\lambda B^\top Q^{-1} \left(F\theta + B(\alpha_1-\alpha_0)\right),\\
		-\nabla_{\alpha_{1}} V_F&=2\rho B^\top E(B\alpha_1+F\theta)\\
		&\qquad+\lambda B^\top Q^{-1} \left(F\theta + B(\alpha_1-\alpha_0)\right).
	\end{align*}
	Therefore, it is possible to conclude that for $\rho=0$ the solution is given by $\alpha_0=0$ since $B^\top E B$ is full rank, and $\alpha_1= -(B^\top Q^{-1} B)^{-1}B^\top Q^{-1} F\theta = -\tilde{B}_0(Q^{-1})F\theta$. 
	Similarly, for $\rho=1$ one has
	\begin{equation*}
		\begin{cases}
			-\lambda B^\top Q^{-1} \left(F\theta + B(\alpha_1-\alpha_0)\right)=0,\\
			2B^\top E(B\alpha_1+F\theta)+\lambda B^\top Q^{-1} \left(F\theta + B(\alpha_1-\alpha_0)\right)=0.
		\end{cases}
	\end{equation*}
	Using the first equation in the second one  concludes that $2B^\top E(B\alpha_1+F\theta)=0$ and hence $\alpha_1=- \tilde{B}_0(E)F\theta$. From the first equation one obtains $ B^\top Q^{-1}B \alpha_0= B^\top Q^{-1} \left(F\theta + B\alpha_1\right)$ and consequently $\alpha_0 = \tilde{B}_0(Q^{-1})(F\theta+B\alpha_1)=\tilde{B}_0(Q^{-1})(I-B \tilde{B}_0(E))F\theta$.
	For the general case using $\nabla_{\alpha_{0}} V_F=0$ one can write
	\[
	B^\top\left( 2\frac{1-\rho}{\lambda} E +  Q^{-1} \right)B\alpha_0= B^\top Q^{-1}(F\theta +  B\alpha_1)
	\]
	that results in $\alpha_0 =\tilde{B}_{2(1-\rho) E /\lambda}(Q^{-1})(F\theta +  B\alpha_1).$  Replacing this expression in $-(\nabla_{\alpha_{0}} V_F+\nabla_{\alpha_{1}} V_F)=0$ gives
	\[((1-\rho) B^\top E B\tilde{B}_{2(1-\rho) E/\lambda}(Q^{-1}) +\rho B^\top E)(B\alpha_1+F\theta)=0 \]
	that is
	\begin{align*}
		B^\top E((1-\rho) &B\tilde{B}_{2(1-\rho) E/\lambda}(Q^{-1}) +\rho I)B\alpha_1=\\
		&- B^\top E( (1-\rho) B\tilde{B}_{2(1-\rho)/\lambda}(Q^{-1}) +\rho I)F\theta,
	\end{align*}
	and, consequently,
	\begin{align*}
		\alpha_1 &=-\tilde{B}_0\left( (1-\rho) EB\tilde{B}_{2(1-\rho) E/\lambda}(Q^{-1}) +\rho E\right)F \theta.
	\end{align*}
Finally, the solution to $\max_{\alpha_1} V_L(1,\lambda,\alpha_1, 0)$ can be easily derived by using the equation $-\nabla_{\alpha_1} V_L(1,\lambda,\alpha_1, 0)=2B^\top E (B\alpha_1+F\theta)+\lambda B^\top Q^{-1}(B\alpha_1+F\theta)=0$.
\end{proof}\fi
\noindent Observe that both the value and the information term contain $B\alpha_1+F\theta$. This suggests choosing $\alpha_0=0$, and minimizing the impact of $F\theta$ using an appropriate $\alpha_1$. This choice of $(\alpha_0,\alpha_1)$ corresponds to the case $\rho=0$.
This observation is  confirmed by numerical results to have a better performance. 

\medskip
\noindent\textbf{Numerical example.} Here we consider the linear system in the limited-case scenario, with parameters 
\begin{equation}\label{eq:linear_sys_example}
	Q=I_2, A=\begin{bmatrix}0 & 1\\1 & 1\end{bmatrix}, B=\begin{bmatrix}0.01\\1\end{bmatrix}, F=\begin{bmatrix}0.5 \\ 0.7\end{bmatrix}
\end{equation}
and $\theta=1$. The control law  is $u_t=Kx_t + \alpha_t$ where $\alpha_t=\alpha_0$ for $t<\nu$ and $\alpha_t=\alpha_1$ for $t\geq \nu$. The control gain stabilizes the system, chosen as $K=\begin{bmatrix}-0.7 & -0.9\end{bmatrix}$.
In Fig. \ref{fig:linear_sys_limited_case} are shown results for the privacy-utility trade-off as a function of $(\rho, \lambda)$. Notice that for $\rho=1$ we obtain the best result, as previously observed in the discussion of Proposition \ref{proposition:full_info_utility_privacy_linear_case}.
\ifdefined\shortpaper\else
In Fig. \ref{fig:linear_sys_limited_case_example} is shown the average value of $\|x\|_2^2$, computed over $10^3$ simulations, with $95\%$ confidence interval (grayed area).
\begin{figure}[t]
	\centering
	\includegraphics[width=\columnwidth]{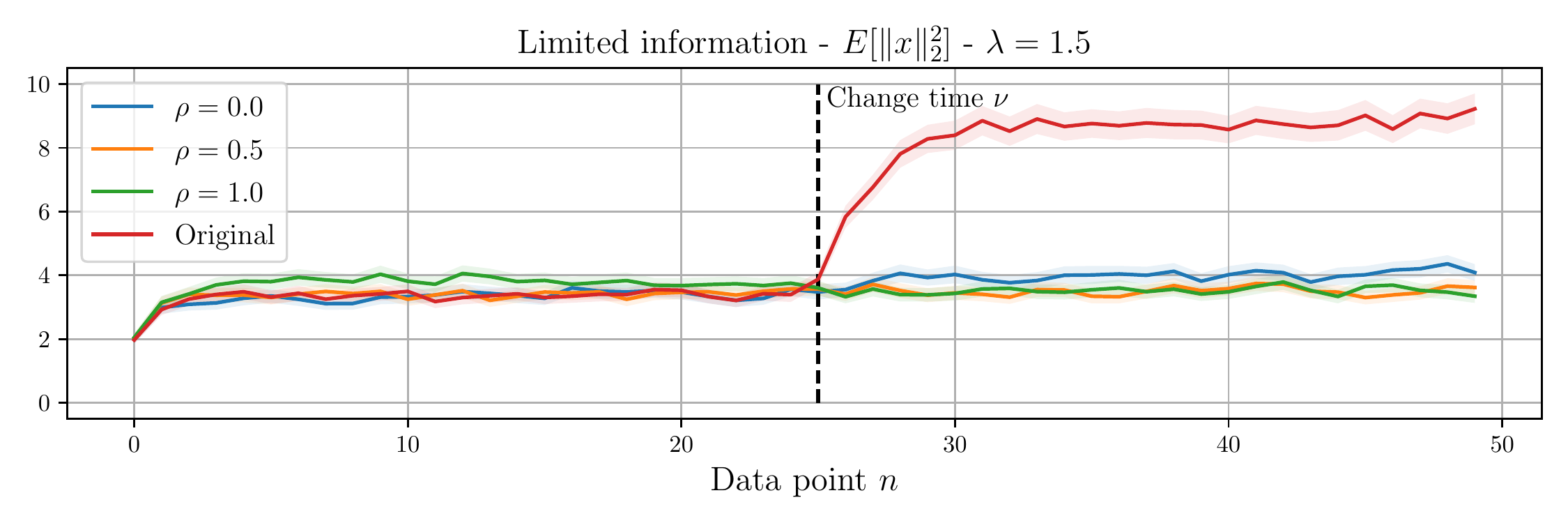}
	\caption{Limited information case (example in \ref{eq:linear_sys_example}): plot of the average value of $\|x\|_2^2$ for different values of $\rho$ and $\lambda=1.5$. The grayed area depicts the confidence interval (95\%).}
	\label{fig:linear_sys_limited_case_example}
\end{figure}

\subsection{3-States MDP}
We illustrate our results in an  MDP with $3$ states and $2$ actions (the details can be found in the code).
 The densities are as follows
\begin{small}
\begin{alignat*}{2}
&P_0(a_1) = \begin{bmatrix} 
.6 &.3 &.1\\
.05 & .85 & .1\\
.15 & .15 &.7
\end{bmatrix},\quad
&&P_0(a_2) = \begin{bmatrix} 
.5 &.2 &.3\\
.5 & .3 & .2\\
.3 & .3 &.4
\end{bmatrix}\\
&P_1(a_1) = \begin{bmatrix} 
.3 &.3 &.4\\
.35 & .5 & .15\\
.8 & .05 &.15
\end{bmatrix},\quad
&&P_1(a_2) = \begin{bmatrix} 
.3 &.55 &.15\\
.8 & .1 & .1\\
.5 & .3 &.2
\end{bmatrix}.
\end{alignat*}\end{small}
\begin{figure}[b]
	\centering
	\includegraphics[width=0.75\columnwidth]{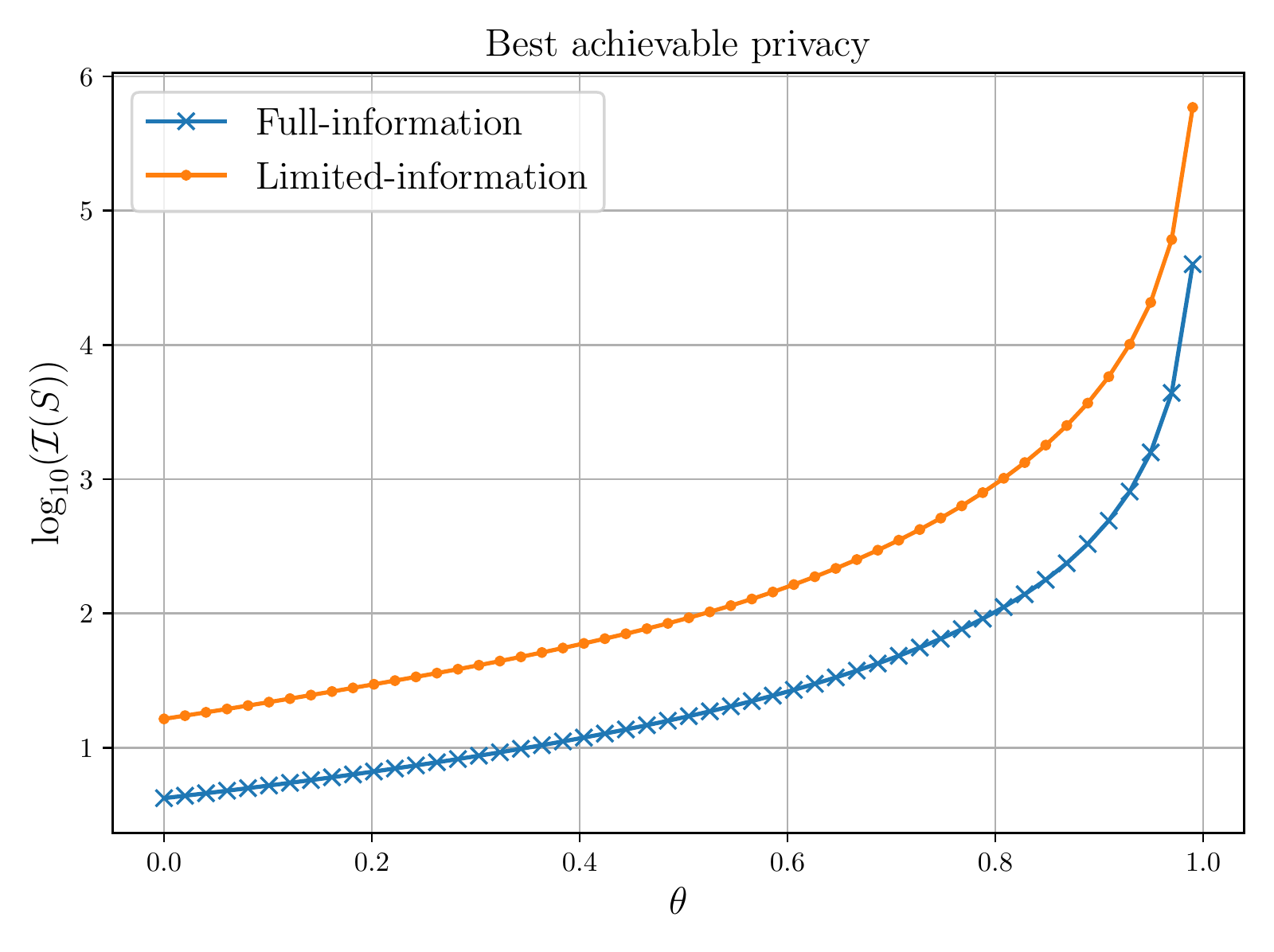}
	\caption{Scaling of the best achievable privacy  between $P_0$ and $P_\theta$ as a function of $\theta$ for both the full and limited information cases. Notice that the $y$-scale is logarithmic.}
	\label{fig:sim_mdp3states_best_privacy}
\end{figure}


Using this example, we analyze how privacy changes according to how "similar" the two models are. For that purpose, we examine what is the best level of privacy between $P_0$ and $P_\theta$, where $P_\theta(y|x,a) = \theta P_0(y|x,a) + (1-\theta) P_1(y|x,a)$, and let $\theta$ range between $0$ and $1$. Results are shown in Fig. \ref{fig:sim_mdp3states_best_privacy}. As one may expect, for $\theta=1$ the privacy level tends to $\infty$, since the two models coincide. For $\theta=0$ we have the level of privacy between $P_0$ and $P_1$. 

\fi

\section{Conclusions}
\noindent In this work, we analyzed the problem of minimizing information leakage of abrupt changes in Markov Decision Processes. By computing policies that minimize the statistical difference between the system before and after the change, one can reduce the loss of privacy resulting from this leakage of information. Future work will focus on removing the assumption that the agent perfectly knows when the change occurs, and how can Reinforcement Learning be applied to compute policies that minimize information leakage.

\section*{Acknowledgements}
\noindent \small{This work was supported by the Swedish Foundation for Strategic Research through the CLAS project (grant RIT17-0046).}

\bibliographystyle{IEEEtran}
\bibliography{ref}
\newpage
\section*{Appendix}

In this section we shall see that the function
\[q(\alpha, \beta) = \sum_{x,a} \alpha_{x,a} \log \frac{\alpha_{x,a}/ \sum_{a'}\alpha_{x,a'}}{\beta_{x,a}/ \sum_{a'}\beta_{x,a'}},\quad \alpha,\beta \in \Delta(X\times U) \]
is not necessarily convex. One can prove  that $q$ is equal to
\[q(\alpha, \beta) = \underbrace{\sum_{x,a} \alpha_{x,a} \log \frac{\alpha_{x,a}}{\beta_{x,a}}}_{f(\alpha,\beta)} - \underbrace{\sum_{x} \|\alpha_{x,*}\|_1 \log \frac{\|\alpha_{x,*}\|_1}{\|\beta_{x,*}\|_1}}_{g(\alpha,\beta)}. \]
For a convex set $\mathcal{X}$ a function $h: \mathcal{X} \to \mathbb{R}$  is convex if the following condition holds $\forall\lambda \in [0,1]$ and $x,y\in \mathcal{X}$:
\[D_h(x,y,\lambda) = \lambda h(x) + (1-\lambda) h(y) - h(\lambda x + (1-\lambda)y)\geq 0.\]
Let $\mathcal{X}= \Delta(X\times U)$ and $x=(\alpha, \beta), y=(\alpha', \beta')$. Then, $D_q(x,y,\lambda)\geq 0$ is equivalent to $D(f,x,y,\lambda)- D_g(x,y,\lambda)\geq 0$. Since $f(x) -g(x)\geq 0$ it is not necessarily true that $D_f(x,y,\lambda)- D_g(x,y,\lambda)\geq 0$. 
For example, consider $|X|=|U|=2$. Then, the following values
\[\alpha= \begin{bmatrix}
.5704 & .0206\\
.1980 & .2110\\
\end{bmatrix},
\beta= \begin{bmatrix}
.1312 & .1403\\
.3757 & .3529\\
\end{bmatrix}
\]
\[
\alpha'= \begin{bmatrix}
.2891 & .0753\\
.5033 & .1322\\
\end{bmatrix},
\beta'= \begin{bmatrix}
.1031 & .3591\\
.3672 & .1706\\
\end{bmatrix}
\]
yield $D_q(x,y,\lambda)\leq 0$ for all $\lambda \in [0,1]$ (Fig. \ref{fig:q_example_1}).
\begin{figure}[h]
\centering
\includegraphics[width=\columnwidth]{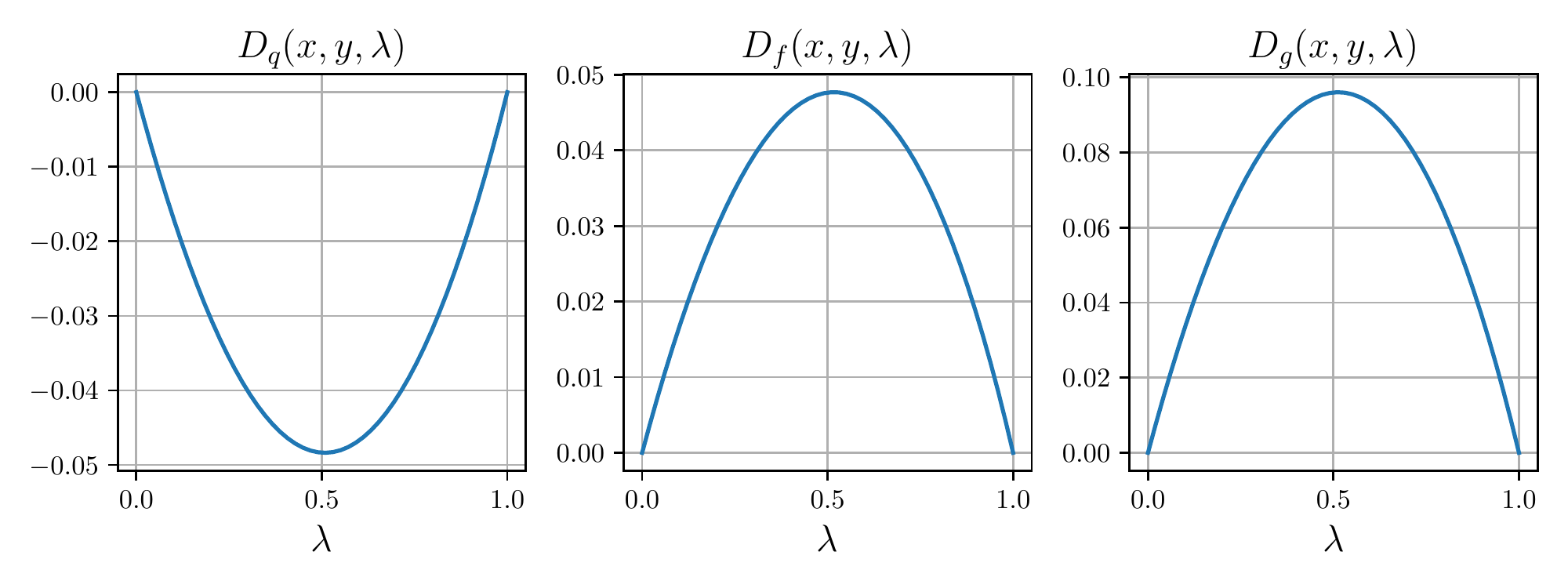}
\caption{Example where $D_q$ is lower than $0$ for all $\lambda\geq 0$.}
\label{fig:q_example_1}
\end{figure}
\\Another example is shown in Fig. \ref{fig:q_example_2}, where $(\alpha,\beta)$ are
\[\alpha= \begin{bmatrix}
.2110 & .3764\\
.3246 & .0881\\
\end{bmatrix},
\beta= \begin{bmatrix}
.4428 & .3469\\
.0297 & .1805\\
\end{bmatrix}
\]
and $(\alpha',\beta')$
\[
\alpha'= \begin{bmatrix}
.1935 & .3282\\
.4342 & .0441\\
\end{bmatrix},
\beta'= \begin{bmatrix}
.3474 & .2314\\
.0416 & .3796\\
\end{bmatrix}.
\]
\begin{figure}[h]
\centering
\includegraphics[width=\columnwidth]{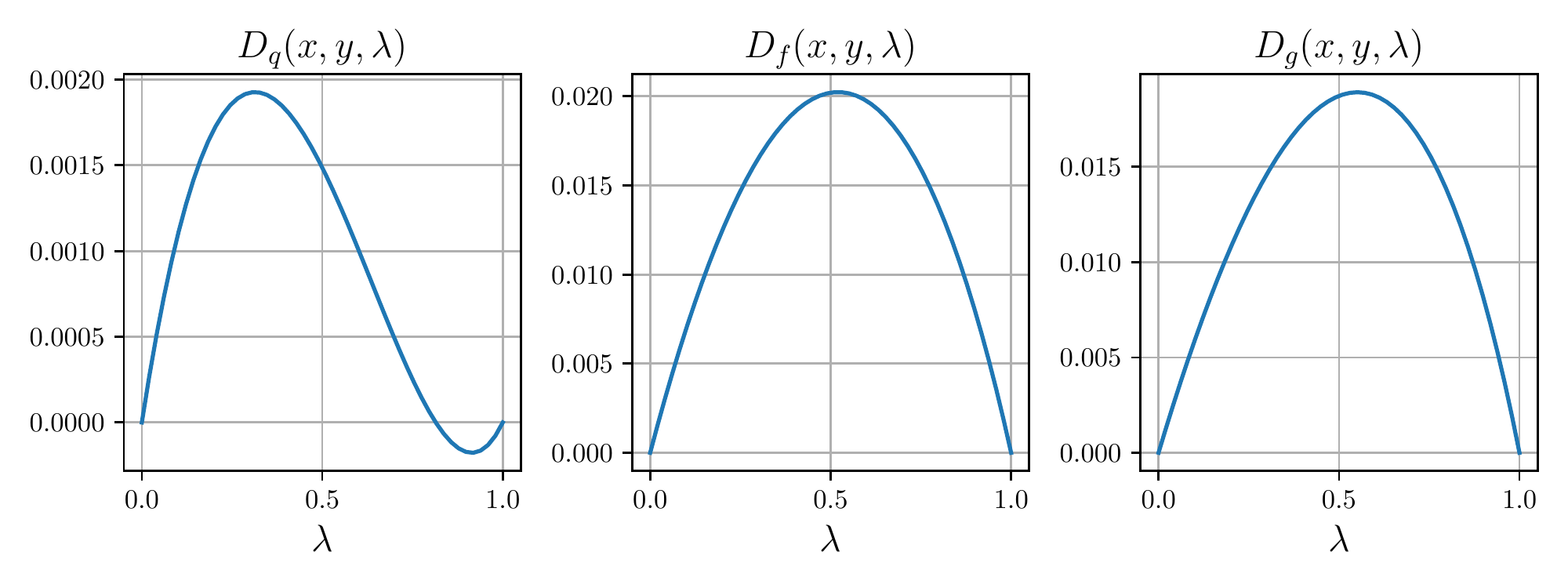}
\caption{Example where $D_q$ can be either positive or negative.}
\label{fig:q_example_2}
\end{figure}

\end{document}